\tikzstyle{bag} = [align=center]
\theoremstyle{definition}
\newtheorem{definition}{Definition}[section]
\theoremstyle{remark}
\newtheorem{example}{Example}[section]
\theoremstyle{plain}
\newtheorem{theorem}{Theorem}[section]
\newtheorem{lemma}[theorem]{Lemma}
\newtheorem{corollary}[theorem]{Corollary}
\newcommand{\tr}{^{\prime}}
\newcommand{\oset}[2]{%
  {\mathop{#2}\limits^{\vbox to -.1\ex@{\kern-\tw@\ex@
   \hbox{\scriptsize #1}\vss}}}}
\def\keywords#1{{\vskip4pt
\noindent
\hbox to50.5pt{KEYWORDS:\quad\hss}\vtop{\advance \hsize by -59.5pt
\leftskip=28pt \rightskip=0pt
\noindent\ignorespaces#1\vskip8pt}}}
\newcommand*\xbar[1]{%
  \hbox{%
    \vbox{%
      \hrule height 0.5pt 
      \kern0.5ex
      \hbox{%
        \kern-0.25em
        \ensuremath{#1}%
        \kern-0.1em
      }%
    }%
  }%
}
\let\runauthor\@author
\let\runtitle 
\date{}
\begin{document}

\title{A geometric power analysis for general log-linear models}

\author{Anna Klimova \\
{\small{National Center for Tumor Diseases (NCT), Partner Site Dresden, and}}\\
{\small{Institute for  Medical Informatics and Biometry,}}\\ 
{\small{Technical University, Dresden, Germany} }\\
{\small \texttt{anna.klimova@nct-dresden.de}}\\
{}
}

\maketitle

\begin{abstract}
Log-linear models are widely used to express the association in multivariate frequency data on contingency tables. The paper focuses on the power analysis for testing the goodness-of-fit hypothesis  for this model type. Conventionally, for the power-related sample size calculations a deviation from the null hypothesis (effect size) is specified by means of the chi-square goodness-of-fit index.  It is argued that the odds ratio is a more natural measure of effect size, with the advantage of having a data-relevant interpretation. Therefore,  a class of log-affine models that are specified by odds ratios whose values deviate from those of the null by a small amount can be chosen as  an alternative. Being expressed  as sets of constraints on odds ratios, both hypotheses are represented by smooth surfaces in the probability simplex, and thus, the power analysis can be given a geometric interpretation as well. A concept of geometric power  is introduced and a Monte-Carlo algorithm for its estimation is proposed. The framework is applied to the power analysis of goodness-of-fit in the context of multinomial sampling. An iterative scaling procedure for generating distributions from a log-affine model is described and its convergence is proved. To illustrate, the geometric power analysis is carried out for data  from a  clinical study.   \end{abstract}

\begin{keywords}
{Bregman divergence, chi-square test, goodness-of-fit,  iterative proportional scaling, log-linear model, test power}
\end{keywords}

\baselineskip=18pt

\section{Introduction}

Log-linear models are frequently employed in the social sciences, machine learning, iterative image reconstruction, and natural language processing. Among them are the conventional (hierarchical) log-linear models for contingency tables \citep*{BFH},  topological models \citep{Hauser1978}, non-standard log-linear models \citep{Rindskopf1990}, Markov random fields  \citep*{Malouf, Huang2010},  indicator models \citep{ZeltermanBook},  and relational models \citep*{KRD11, Forcina2019}. Several further generalizations, known as the generalized log-linear models (GLLM), were described in \cite{ThompsonGLLM1981},  \cite{EspelandGLLM1986},  \cite{kateri2014},  among others.  A log-linear model is a set of distributions on a discrete finite sample space, for example, a contingency table,  whose parameters, $\boldsymbol p$, can specified as 
\begin{equation}\label{mainModel}
\log \boldsymbol{p} = \mathbf{A} \boldsymbol{\beta},
\end{equation}
where $\mathbf{A}$ is a design matrix with linearly-independent columns,  and  $\boldsymbol \beta$ is a vector of so-called log-linear parameters.  The structure of $\mathbf{A}$ is entailed by the model-specific constraints.  For many model types,  such as  conventional log-linear models, $\mathbf{A}$ is a 0-1 matrix.  This paper focuses on the class of models, referred to as general log-linear by \cite{Haber}, which can be (re-)parameterized in a way that their design matrix has exclusively non-negative integer entries.  Examples can be found among the probabilistic trees and monomial models \citep{GeorgenSmithTrees2018, LeonelliRico}.   In this manuscript,  for simplicity of presentation,  the models in this generality will be addressed as log-linear,  although the level of generalization considered by \cite{Haber} will be assumed throughout.  

This paper discusses several aspects of the power analysis for goodness-of-fit testing under log-linear models.  In applied research, for example in clinical studies, two kinds of power analyses are routinely performed.  Apriori power analysis is conducted before a study begins and aims at finding a sample size which would allow to reach a nominal power, often $80\%$, in testing the study's primary hypothesis.  Such calculations are especially relevant for planning prospective clinical trials where the study size may need to be specified in advance.  After a study is complete,  a power analysis is also sometimes carried out to estimate the achieved (posteriori) power in testing statistical significance under the observed data. If the null hypothesis concerns the goodness of fit of a log-linear model, then, for the purpose of power calculations, the actually observed frequency distribution which is  used to test the goodness-of-fit  is assumed to be the alternative model.  The power is then calculated using a noncentral chi-square distribution whose noncentrality parameter is equal to the observed value of the Pearson chi-square statistic normalized by the observed total. This parameter characterizes the discrepancy between the null and alternative hypotheses and can be given an effect size interpretation.  

The apriori power analysis poses more of a challenge with respect to the choice of an alternative model.  One option is to use a hypothetical frequency distribution that seems to be a plausible alternative and calculate the resulting test power against the null model. Several examples illustrating this approach for hierarchical log-linear models were given  by \cite{OlerNoncentrality}.  The second option is to conduct power calculations by specifying the noncentrality parameter directly, relating it to an omnibus value of the goodness-of-fit index \citep{Cohen1988}.  Apart from the fact that it can be hard either to justify a specific choice of  an alternative distribution or to interpret a noncentrality parameter in data-relevant terms,  a more serious concern can be raised. In the analyses of categorical variables,  an effect-size interpretation is routinely given to the proximity between an odds ratio and the unity   \citep*[cf.][]{Chen2010big}.  However,  it can happen that the interpretation of a strength of association  in terms of odds ratios does not agree with the interpretation implied from the goodness-of-fit test.  As illustrated by \cite*{HaddockOR} for the model of independence on a $2\times 2$ table, there exist  distributions which are close to independence if the proximity is measured by the odds ratio but far from it if the deviation is assessed using the chi-square test.  One can expect that this inconsistency persists when larger tables and more complex models are involved.  This paper aims to address this issue and proposes a geometric framework for the power analysis of log-linear models. 

Assume that the null hypothesis of interest is posed as a log-linear model,  $\mathcal{G}_{0}$, and  tested against a log-affine model, $\mathcal{G}_{1}(\boldsymbol \xi)$, with the same design matrix:
$$\mathcal{G}_0: \, \log \boldsymbol \pi = \mathbf{A}\boldsymbol \beta \,\quad \mbox{ vs }\, \quad \mathcal{G}_1(\boldsymbol \xi): \, \log \boldsymbol \pi = \mathbf{A}\boldsymbol \beta + \log \boldsymbol \xi,$$
for a given offset-vector $\boldsymbol \xi$.   Recall that a set of restrictions specifying a log-linear or log-affine model defines a smooth surface in the probability simplex $\Delta_I$ of an appropriate dimension $I$ \citep{SturBook, SullivantAS}.  Then, because a goodness-of-fit chi-square statistic (for example the Pearson chi-square) may be interpreted as a squared distance between a distribution and a log-linear model,  a significance testing procedure can be given a geometric interpretation \citep[cf.][]{DiaconisEfron1985}.  The idea of geometric power analysis developed in this manuscript is the following. Take  $\epsilon > 0$ and surround the surface $\mathcal{G}_0$ by a ``tube''  consisting of the probability distributions in $\Delta_I$ whose goodness-of-fit statistic under $\mathcal{G}_0$ is less than $\epsilon$. Call this tube  an ``acceptance region'' of radius $\epsilon$, and its complement in $\Delta_I$ a ``rejection region''. The relative surface area of the subset of $\mathcal{G}_{1}(\boldsymbol \xi)$ outside the tube  with respect to the whole surface area of $\mathcal{G}_1(\boldsymbol \xi)$ is defined as geometric power.  The geometric power is thus equivalent to the probability of rejection of $\mathcal{G}_0$ in favor of $\mathcal{G}_1(\boldsymbol \xi)$ assuming the latter holds.


Notice that the concept of geometric power describes a relationship between two surfaces in a probability simplex but lacks the stochastic component. In particular, the definition does not take into account that the data  represent a realization of a discrete random variable.  
For the power calculations in this manuscript, it will be assumed that the data come from a multinomial distribution. In this case the sample size is fixed in advance,  so the apriori power calculations are more relevant. A Monte-Carlo-based stochastic extension of the geometric power  is defined for a given pair of $\mathcal{G}_0$ and  $\mathcal{G}_1(\boldsymbol \xi)$ and a fixed sample size $N$.  The multinomial sampling scheme is applied to draw a frequency distribution parameterized by $N$ and a probability distribution from  $\mathcal{G}_1(\boldsymbol \xi)$. The goodness-of-fit of the obtained realization with respect to the null model can be assessed using the chi-square test under a prespecified significance level, so the corresponding acceptance radius is equal to a  quantile of the relevant chi-square distribution. The procedure is replicated by sampling different probability distributions from  $\mathcal{G}_1(\boldsymbol \xi)$ and obtaining a multinomial realization from each. Finally, the proportion of replicates in the Monte-Carlo sample which were rejected by the chi-square test is computed. Because a multinomial  distribution can result in multiple stochastic realizations as frequency distributions, the empirically obtained  stochastic extension is referred to as a cumulative geometric power.   
The cumulative power analysis can be implemented for a range of sample sizes to obtain a power  table that can be used to select a sample size sufficient to achieve a desired power in the goodness-of-fit testing.

Whereas the geometric or cumulative power for a given pair of models can be approximated using a Monte-Carlo procedure in a straightforward way,  the generation of probability distributions from an alternative log-affine model is non-trivial. The construction algorithm proposed here utilizes the fact that a log-affine model is an exponential family and thus induces a mixed parameterization of distributions in the probability simplex $\Delta_I$.  Given a matrix $\mathbf{D}$ formed by a basis of the kernel space of $\mathbf{A}$, each  $\boldsymbol p \in \Delta_I$ is uniquely specified by its mean-value parameters $\mathbf{A}\tr\boldsymbol p$ and its canonical parameters $\mathbf{D}\log \boldsymbol p$.  Under such parameterization, the canonical parameters of distributions in  $\mathcal{G}_1(\boldsymbol \xi)$ stay constant and equal to $\log \boldsymbol \xi$. The model generation is carried out in two steps: (a) selection of values for the mean-value parameters $\mathbf{A}\tr\boldsymbol p$,  and (b) computation of the unique distribution parameterized by the selected values and the canonical parameters $\log \boldsymbol \xi$.  The procedure is essentially the MLE computation under $\mathcal{G}_1(\boldsymbol \xi)$. Step (a) can be implemented, for example, by using the mean-value parameters from a randomly drawn point from the probability simplex $\Delta_I$. Step (b) is a classical setup for using iterative scaling.  In fact, the original iterative proportional fitting  (IPF) algorithm for  classical hierarchical log-linear models  has a clear geometric interpretation in terms of projections with respect to the Kullback-Leibler information divergence  \citep{Csiszar}.  An analogous interpretation can be given to the generalization of the IPF called the generalized iterative scaling (GIS) of \cite{DarrochRatcliff}, see  \cite{CsiszarGIS}.  However,  one of the main assumptions standing behind the convergence of IPF and GIS, namely, the presence of the genuine overall effect, prevents one from using these algorithms for  log-linear models in which no such effect is intrinsically possible.  The iterative scaling algorithm proposed here can be applied in the no-overall effect case as well and is shown to generate a sequence of projections with respect to the Bregman information divergence \citep{Bregman} which converges to the unique distribution in $\mathcal{G}_1(\boldsymbol \xi)$  with prespecified values of $\mathbf{A}\tr\boldsymbol p$.   The proposed algorithm and its proof generalize the GIS itself and its geometric interpretation given by \cite{CsiszarGIS}. Finally, although a  procedure with a geometric meaning might be better suited for a geometry-based framework,  step (b) can, of course, be performed using a version of Newton's algorithm \citep[cf.][]{kateri2014,VermuntBook1997}, or  methods of convex optimization \citep[cf.][]{BertsekasNLP}. However, in these methods, the values for canonical parameters would be reestimated at each iteration, and, as a result, the limiting distribution would be close to the model of interest $\mathcal{G}_1(\boldsymbol \xi)$ only up to a given precision.

The manuscript is structured as follows. In Section \ref{sectionNotation}, the definition of log-linear models and their properties relevant for the power analysis are reviewed.  Section \ref{SectionPower} introduces a geometry-based framework for the power analysis of goodness of fit and proposes the concepts of geometric power and cumulative geometric power. The respective Monte-Carlo algorithms for the power estimation are described. In Section \ref{MLEcomputeMine}, an iterative scaling procedure  for generating distributions under the alternative hypothesis of Section \ref{SectionPower} is proposed and its convergence is proved. In Section \ref{SectionExample},  the framework is applied to carry out the geometric power analysis for a clinical study. Concluding remarks appear in Section \ref{SectionConclusion}.

\section{Background and notation}\label{sectionNotation}

Let $\boldsymbol Y$ be a discrete random vector  on a finite sample space  $\mathcal{I}$ which is treated as an ordered sequence of  $I = |\mathcal{I}|$ elements called cells.   Suppose the distribution of $\boldsymbol Y$ is parameterized by $\boldsymbol \delta =(\delta_i)_{i = 1}^I \in \mathbb{R}_{>0}^I$, where $\boldsymbol \delta$ is either a probability vector, $\boldsymbol \delta \equiv \boldsymbol p$, with $\sum_{i = 1}^I p_i = 1$, or a vector of intensities, $\boldsymbol \delta \equiv \boldsymbol \lambda$. Write $\mathcal{P}$ for the set of all strictly positive distributions on $\mathcal{I}$ and $\Delta_I$ for the open $(I-1)$ dimensional simplex.

Let $\mathbf{A} = (a_{ij}) \in \mathbb{Z}_{\geq 0}^{I \times J}$ be an $I \times J$ matrix with linearly-independent columns. The general log-linear model $\mathcal{G}(\mathbf{A})$ is the subset of $\mathcal{P}$ that satisfies:
\begin{equation} \label{PMmatr}
\mathcal{G}(\mathbf{A}) = \left \{ \boldsymbol \delta \in \mathcal{P}: \,\, \mbox{log } \boldsymbol \delta = \mathbf{A}\boldsymbol \beta, \, \mbox{for some } \, \boldsymbol \beta \in \mathbb{R}^J \right\}.
\end{equation}
Here, the components of $\boldsymbol \beta$ are the log-linear parameters of the model.  Equivalently, the model can be expressed either as an exponential family or in a multiplicative form:
\begin{equation} \label{RMexpF}
\mathcal{G}(\mathbf{A}) = \{ \boldsymbol \delta \in \mathcal{P}: \,\,\delta_i = \exp\{\sum_{j=1}^J  a_{ij} \beta_j\} = \prod_{j=1}^J \theta_j^{a_{ij}}, \, i \in \mathcal{I}, \, \mbox{ for some } \, \boldsymbol \theta \in \mathbb{R}^J_{>0}\},
\end{equation}
where $\theta_j = \mbox{exp  }(\beta_j)$, and $\boldsymbol \theta =(\theta_1, \dots, \theta_J) \in \mathbb{R}_{>0}^J$ denotes the vector of (multiplicative) parameters associated to columns of $\mathbf{A}$.  The column span of $\mathbf{A}$ is called the design space, and its co-dimension gives the model's degrees of freedom $K = dim(Ker(\mathbf{A})) = I - J$.  For example, in conventional log-linear models,  the model matrix $\mathbf{A}$ comprises the indicators of cylinder sets of marginal distributions of $\boldsymbol{Y}$, and the components of $\boldsymbol \beta$ are the interaction parameters associated to these sets \citep{Agresti2002}. In the framework of relational models,  the columns of $\mathbf{A}$ are indicators of arbitrary subsets of $\mathcal{I}$, with $\boldsymbol \beta$ being the subset parameters \citep{KRD11}.  In the case of probabilistic staged trees, the columns of $\mathbf{A}$ reflect on root-to-leaf paths,  and $\boldsymbol \theta = \exp(\boldsymbol \beta)$ correspond to tree edges \citep{GeorgenSmithTrees2018}.

Routinely, the unitary column $\boldsymbol 1 = (1, \dots, 1)\tr$ is assumed to belong to the design space. If this holds,  $\mathcal{G}(\mathbf{A})$ can be reparameterized to have a parameter, referred to as the overall effect, that is common to all cells in $\mathcal{I}$. In the models for probabilities, the presence of the overall effect implies that the normalization constraint $\boldsymbol 1\tr \boldsymbol p = 1$ can be tailored in the log-linear representation $ \mbox{log } \boldsymbol \delta = \mathbf{A}\boldsymbol \beta$.  However, if $\boldsymbol 1 \notin colspan (\mathbf{A})$, that is, no genuine overall effect is present, the normalization has to be imposed in addition to the log-linear constraints. 
 
A dual representation of the model (\ref{PMmatr}) can be obtained using a $K \times I$ matrix, $\mathbf{D}$, comprised by a basis of the kernel space $Ker(\mathbf{A})$. Because $\mathbf{D} \mathbf{A} = \mathbf{0}$, one has: 
\begin{equation}\label{dualF2d}
\mathcal{G}(\mathbf{A}) = \{ {\boldsymbol \delta} \in \mathcal{P}: \,\, \mathbf{D} \mbox{log }\boldsymbol \delta = \boldsymbol 0\}.
\end{equation}
For example, if $\mathcal{G}(\mathbf{A})$ is a hierarchical log-linear model, one can find a dual representation consisting entirely of marginal and conditional (log) odds ratios \citep{Agresti2002}.  

Being an exponential family, a log-linear model induces a mixed parameterization of all positive probability distributions  $\boldsymbol \delta \in \mathcal{P}$, where $\mathbf{A}\tr \boldsymbol \delta$ and $\mathbf{D} \log \boldsymbol \delta$ play the roles of mean value and canonical parameters, respectively. The variation independence and dual representation (\ref{dualF2d}) imply that for all $\boldsymbol \delta \in \mathcal{G}(\mathbf{A})$ the canonical parameters equal to zero and the distributions are uniquely specified by their mean-value parameters. As an illustration, take a probability distribution $\boldsymbol p = (p_{00}, p_{01}, p_{10}, p_{11})\tr$ on a $2\times 2$ contingency table. The row- and column  sums $p_{0+} = p_{00} + p_{01}$, $p_{1+} = p_{10} + p_{11}$, $p_{+0} = p_{00} + p_{10}$, $p_{+1} = p_{01} + p_{11}$ are the mean-value parameters entailed by the two-way independence model and are variation independent from the odds ratio  $OR = p_{00}p_{11} / (p_{01}p_{10})$. All distributions in the model of independence satisfy $OR = 1$ and can be reproduced uniquely by prescribing $p_{0+}$, $p_{1+}$, $p_{+0}$, $p_{+1}$ \citep[cf.][]{BFH}. 

The main points about the maximum likelihood estimation under log-linear models are outlined next. 
Assume that $\boldsymbol{Y}$  has either a multivariate Poisson distribution $Pois(\boldsymbol \lambda)$ or a multinomial distribution $Mult(N, \boldsymbol p)$, and let $\boldsymbol y$ be an observed realization of $\boldsymbol{Y}$.  
If the MLE of $\boldsymbol \lambda$ or $\boldsymbol p$ exists, it is the unique maximum of the Poisson or multinomial likelihood function, respectively. Implementing the Lagrange multiplier method, one can show that the MLE is the unique solution to the system of equations \citep{SilveyLMtest, AitchSilvey60}:
\begin{align}
&\mathbf{A}\tr\hat{\boldsymbol{\lambda}} = \mathbf{A}\tr \boldsymbol y, \quad \mathbf{D} \mbox{log } \hat{\boldsymbol{\lambda}} = \boldsymbol 0 \quad \mbox{ (Poisson case)}; \label{MLEsysP}\\
&\mathbf{A}\tr\hat{\boldsymbol{p}} = \gamma\mathbf{A}\tr (\boldsymbol y/N), \quad \mathbf{D} \mbox{log } \hat{\boldsymbol{p}} = \boldsymbol 0,  \,\,\boldsymbol 1'\boldsymbol p = 1 \quad \mbox{ (Multinomial case)}.  \label{MLEsysM}
\end{align}
The coefficient $\gamma$ appearing in (\ref{MLEsysM}) stands for a Lagrangian multiplier. If $\mathcal{G}(\mathbf{A})$ is a model for intensities, then $\gamma \equiv 1$, that is, the mean-value parameters (sufficient statistics) of the MLE are equal to those of the observed distribution \citep[cf.][]{KRD11}.  This property characterizing the observed and estimated mean-value parameters is sometimes referred to as the mean-value theorem or Birch theorem. If $\mathcal{G}(\mathbf{A})$ is a model for probabilities with the overall effect, the mean-value theorem holds as well \citep[cf.][]{Andersen74}, so $\gamma \equiv 1$. Moreover, the equality of sufficient statistics $\mathbf{A}\tr\hat{\boldsymbol{p}} = \mathbf{A}\tr (\boldsymbol y/N)$ automatically entails the normalization $\boldsymbol 1'\boldsymbol p = 1$.  However,  when $\mathcal{G}(\mathbf{A})$ is a model for probabilities without the overall effect, the Lagrangian multiplier depends on the data, $\gamma = \gamma(\boldsymbol y)$, and the sufficient statistics of the MLE are proportional, but not equal, to those observed. This result is a generalization of Birch theorem, and $\gamma$ is called the adjustment factor  \citep{KRD11, Forcina2019}.

A model without a genuine overall effect is used for a running example presented next. As emphasized, the absence of the overall effect is not a matter of inconvenience but rather an intrinsic feature of the model. 

\begin{figure}
\begin{center}
\includegraphics[scale=0.6]{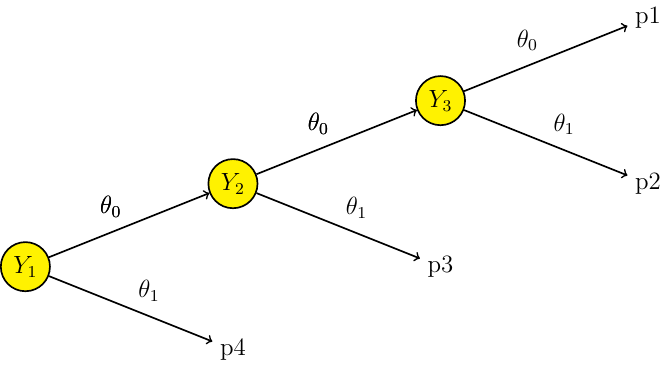}
\end{center}
\caption{Tree representation for the model in Example \ref{VaccineModel}}
\label{TreeVaccine}
\end{figure}

\begin{example}\label{VaccineModel}
Consider an experiment with repeated observations in which all participants are subsequently administered the same treatment with two outcomes, \textit{success} or  \textit{failure}. Suppose that, by design, the treatment can be reapplied to each participant up to three times but is stopped as soon as the first success is achieved. A clinical study where this design is relevant will be described in Section \ref{SectionExample}. A graphical representation  is given in Figure  \ref{TreeVaccine}, where $\theta_1$ denotes the probability of success, $\theta_0= 1-\theta_1$ the probability of failure, and $\boldsymbol p = $ ($p_1$, $p_2$, $p_3$, $p_4$) the probabilities of corresponding treatment paths. The model of independence of subsequent failures given the previous ones can be expressed either as a collection of risk equations that can be read out from  Figure \ref{TreeVaccine}:
\begin{equation}\label{calves3risk}
\frac{p_1 + p_2}{p_1 + p_2 + p_3} = \frac{p_1}{p_1 + p_2} = 1 - p_4, 
\end{equation} 
or in a multiplicative form:
\begin{equation}\label{calves3}
p_{1} = \theta_0^3, \,\,  p_{2} = \theta_0^2 \theta_1, \,\, p_{3} = \theta_0\theta_1, \,\, p_{4} = \theta_1.
\end{equation}
By substituting (\ref{calves3}) into the risk equations (\ref{calves3risk}), one can show that the condition $\theta_0 + \theta_1 = 1$ is necessary and sufficient for $\boldsymbol 1'\boldsymbol p = 1$.   A log-linear form of (\ref{calves3}) can be obtained using the design matrix 
\begin{equation}\label{VaccineMMx}
 \mathbf{A}\tr = \left( \begin{array}{cccc}
                                   3 & 2 & 1 & 0\\
                                   0 & 1 & 1 & 1\\
                                   \end{array} \right).
\end{equation}
Because $(1,1,1,1)\tr \notin colspan (\mathbf{A})$, the model (\ref{VaccineMMx})  does not have the genuine overall effect.  

Let $\boldsymbol y = (y_1, y_2, y_3, y_4)\tr$ represent the number of participants observed in each response category after the experiment is complete.   In the absence of drop-outs,  one can assume that $\boldsymbol y$  is a realization of $Mult(N, \boldsymbol p)$, where $N = \sum_{i = 1}^4 y_i$. The kernel of the multinomial log-likelihood is equal to
$L(\boldsymbol \theta \mid \boldsymbol y) = (3y_{1}+2y_{2} +  y_{3})\mbox{log } \theta_0  +  (y_{2}+y_{3} +  y_{4}) \mbox{log }\theta_1$, and  $S_1 = 3y_{1}+2y_{2} +  y_{3}$ and $S_2 = y_2+y_3 + y_4$ are the sufficient statistics of the model. The maximum likelihood estimators have a closed form, namely:
$\hat{\theta}_0 = S_1/ (S_1 + S_2)$, $\hat{\theta}_1 = S_2/ (S_1 + S_2)$, and
\begin{equation}\label{MLEcalves3}
\hat{\boldsymbol p} = \left (\frac{S_1^3}{(S_1 + S_2)^3}, \frac{S_1^2S_2}{(S_1 + S_2)^3}, \frac{S_1 S_2}{(S_1 + S_2)^2},  \frac{S_2}{(S_1 + S_2)}\right)\tr.
\end{equation}
Revisiting the system (\ref{MLEsysM}), notice that the adjustment factor $\gamma =  N  (\hat{\theta}_0^2 + \hat{\theta}_0 + 1)/{(S_1 + S_2)}$. \qed 
\end{example}
\noindent 

The goodness of fit testing for general log-linear models, whether or not the overall effect is present, can be carried out using the methods described in \cite{AitchSilvey58, AitchSilvey60} and \cite{SilveyLMtest}. When the model holds, the Pearson chi-square statistic $X^2$ and the likelihood ratio test statistic $G^2$: 
$$X^2(\boldsymbol y)=\sum_{i = 1}^I \frac{(y_i- \hat{y}_i)^2}{\hat{y}_i} \,\,\, \mbox{  and  }
G^2(\boldsymbol y)=2 (\sum_{i = 1}^I y_i \mbox{ log } ({y_i}/{\hat{y}_i}) - (y_i - \hat{y}_i))$$  
are asymptotically distributed as $\chi^2$ with $K = dim(Ker(\mathbf{A}))$ degrees of freedom.  Otherwise, when the model does not hold, $X^2(\boldsymbol y)$ and $G^2(\boldsymbol y)$ are distributed as a noncentral $\chi^2_K$ with parameter $\phi$ equal to 
\begin{equation}\label{NCP}
\phi = \sum_{i = 1}^I \frac{(y_i/N - \hat{y}_i/N)^2}{\hat{y}_i/N} =  1/N \cdot X^2(\boldsymbol y).
\end{equation}

The noncentrality parameter $\phi$ can be used to characterize a deviation of a hypothetical true distribution from the null model whose goodness-of-fit is tested, and the goodness-of-fit index computed as $w = \sqrt{\phi}$ is interpreted as an effect size \citep[cf.][]{Cohen1988}.  

A major caveat with interpretation of such an effect size will be illustrated next using the model of independence  on a $2 \times 2$ table.  The model is frequently used in practice, and a detailed computation of the MLE can be spared. 

\begin{table}[h]
\begin{minipage}[b]{0.45\linewidth}
\centering
\vspace{5mm}
\begin{array}[t]{l|cc}
& \multicolumn{2}{c}{Outcome}\\
\cline{2-3}
Treatment & Failure & Success\\
\hline \\ [-6pt]
Control & 1 & 9\\
Experimental & 9 & 33 \\[6pt]
\end{array}
\caption{A frequency table with $OR \approx 0.41$ \, ($1/OR \approx 2.45$) and $X^2 \approx 0.68$}
\label{ExTR1}
\end{minipage}
\hspace{0.5cm}
\begin{minipage}[b]{0.45\linewidth}
\centering
\vspace{5mm}
\begin{array}[t]{l|cc}
& \multicolumn{2}{c}{Outcome}\\
\cline{2-3}
Treatment & Failure & Success \\
\hline \\ [-6pt]
Control & 3 & 7\\
Experimental  & 7 & 35 \\[6pt]
\end{array}
\caption{A frequency table with $OR \approx 2.11$ and $X^2 \approx 0.92$}
\label{ExTR2}
\end{minipage}
\end{table}

\begin{example}\label{PowExampleInd}
Denote by $\boldsymbol y_1 = (1,9,9,33)\tr$ and $\boldsymbol y_2 = (3,7,7,35)\tr$  the hypothetical observed data presented in Tables \ref{ExTR1} and \ref{ExTR2}, respectively. Assume that $\boldsymbol y_1, \boldsymbol y_2$ are realizations of $Mult(N, \boldsymbol \pi)$, where $N = 52$, and $\boldsymbol \pi$ stands for the true probability vector. Under the model of independence, the corresponding Pearson statistics equal $X^2(\boldsymbol y_1) \approx 0.68$ and $X^2(\boldsymbol y_2) \approx 0.92 $, and the noncentrality parameters computed from (\ref{NCP}) are: $\phi_1 = 0.68/52$, $\phi_2 = 0.92/52$. Assuming a $5\%$ significance level, the achieved (posteriori) power in rejecting the hypothesis of independence is equal, in the first case, to $82\%$ and, in the second, to $99\%$. The results of power analysis agree with the values of the goodness-of-fit index: the alternative distribution in the first case is closer to the model of independence than the distribution in the second case, and is thus more difficult to reject. However, a closer look at the data shows that if the deviation from independence is measured using the odds ratio then the conclusion should be exactly the opposite, because $OR(\boldsymbol y_1) \approx 0.41$ ($1/OR(\boldsymbol y_1) \approx 2.45)$ and $OR(\boldsymbol y_2) \approx 2.11$. The inconsistency is elaborated further. \qed
\end{example}

The phenomenon demonstrated in Example \ref{PowExampleInd} is, in fact, a consequence of the variation independence between the odds ratio and the marginal sums in a $2 \times 2$ table. The MLE under independence, and in turn the Pearson statistic $X^2$ and deviance $G^2$, are all functions of row-and column sums of the table. The variation independence entails that, depending on a choice of marginal distributions, the same odds ratio value can lead to different  $\chi^2$ goodness-of-fit indices, and vice versa, the same value of the goodness-of-fit index may correspond to a large variety of association structures expressed using odds ratios.  To obtain a more detailed illustration, we generated $10^4$ probability vectors $\boldsymbol p \in \Delta_4$ using the flat Dirichlet distribution. In Figure \ref{GOF1}, their goodness-of-fit indices $w = w(\boldsymbol p)$  with respect to the $2 \times 2$ independence are plotted against the odds ratios $p_{00}p_{11}/ (p_{01} p_{10})$.  For example, the odds ratios for the distributions $\boldsymbol p_1 =$ $(0.250365,$ $0.230925,$ $0.181938,$ $0.336772)\tr,$
$\boldsymbol p_2 =$ $(0.703505,$ $0.252487,$ $0.025576,$ $0.0184322)\tr$ are about $2$, while $w(\boldsymbol p_1) \approx 0.171$ and $w(\boldsymbol p_2) \approx 0.071$. On the other hand,  the distributions $\boldsymbol p_3 =$  $(0.589401,$ $0.13757,$ $0.077083,$ $0.195946)\tr$, and $\boldsymbol p_4 =$  $(0.425385,$ $0.012916,$ $0.288966,$ $0.272734)\tr$ both have $w \approx  0.5$, while $OR( \boldsymbol p_3) \approx 10.9$, and $OR( \boldsymbol p_4) \approx 31.1$. Several examples with real frequency data illustrating that the relationship between the Pearson chi-square statistic and the odds ratio is not monotone can be found in \cite{HaddockOR}.  

Further, the issue of multiple possible association structures leading to the same value of a goodness-of-fit statistic stays in the shadow during the apriori power analysis used for the power-related sample size determination.  
For a sample size calculation based on a noncentral $\chi^2$ distribution, one has to specify a noncentrality parameter $\phi$ which is supposed to reflect on the magnitude of deviation of a hypothetical true distribution from the model of interest, for example independence.  In practice, one can choose $\phi$  relating it to an omnibus value of a goodness-of-fit index \citep{Cohen1988}, although it would be difficult to argue that an ad hoc chosen noncentrality parameter is a practically relevant effect size. Another way to proceed is to calculate $\phi$ from a prespecified probability distribution, as was described  in \cite{OlerNoncentrality} for the conventional log-linear models on contingency tables. Such a distribution is constructed based on a set of restrictions on the odds ratios and some plausible values for marginals of the table. This procedure is more intuitive than the index-based one, especially because the proximity between an odds ratio and the unity is routinely given an effect-size interpretation, see \cite{Chen2010big}, among others. On the other hand, it does not take into account that the resulting value of $\phi$ will depend on the preselected marginals, so the issue brought up in Example \ref{PowExampleInd} persists. 

The power analysis framework proposed in the next section  allows for defining a discrepancy from a log-linear model using exclusively odd ratios while placing no restrictions on the marginal distributions. Being  ``marginal-free``, the proposed method takes into account the whole family of values of the goodness-of-fit test statistics which are entailed by the distributions with a given odds ratios structure.

\begin{figure}
\begin{center}
\includegraphics[scale=0.4]{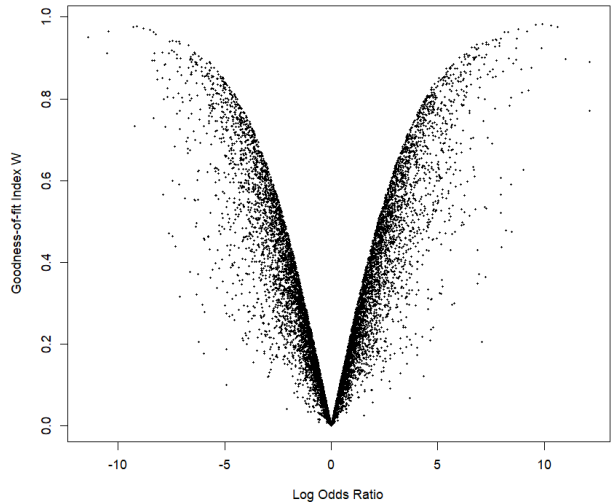}
\end{center}
\caption{Goodness-of-fit $w$-indices with respect to the model of independence on $2 \times 2$ contingency table, computed for $10^4$ probability distributions generated from the open simplex $\Delta_4$ using the Dirichlet distribution $Dir_4(1)$.}
\label{GOF1}
\end{figure}


\section{A geometric power of the chi-square test}\label{SectionPower}

 Let $\boldsymbol \pi \in \mathcal{P}$ denote the true distribution on $\mathcal{I}$ and $\mathcal{G}_{0}(\mathbf{A}) = \{\boldsymbol p \in \mathcal{P}: \, \mbox{log } \boldsymbol p = \mathbf{A}\boldsymbol \beta\}$ be a log-linear model which is believed to describe the association structure in $\boldsymbol \pi$. Suppose that the null hypothesis  
$\mathcal{H}_0: \boldsymbol \pi \in \mathcal{G}_{0}(\mathbf{A})$ is tested against a class of alternatives defined as a log-affine model:
$$\mathcal{H}_1: \,\,  \boldsymbol  \pi \in \mathcal{G}_{1}(\mathbf{A}, \boldsymbol \xi) = \{\boldsymbol p \in \mathcal{P}: \, \mbox{log } \boldsymbol p = \mathbf{A} \boldsymbol \beta + \log\boldsymbol \xi\},$$
for a fixed $\boldsymbol \xi \in \mathbb{R}^I_{>0}$.  Letting $\mathbf{D}$ be a kernel basis matrix of $\mathbf{A}$,  one can reformulate both  hypotheses in the dual form (\ref{dualF2d}):
\begin{align}
&\mathcal{H}_0: \boldsymbol  \pi \in \mathcal{G}_{0}(\mathbf{D}) = \{\boldsymbol p \in \mathcal{P}: \, \mathbf{D}\mbox{log } \boldsymbol p = \boldsymbol 0\},  \label{H0dual}\\
&\mathcal{H}_1: \boldsymbol  \pi \in \mathcal{G}_{1}(\mathbf{D}, \boldsymbol \xi) =  \{\boldsymbol p \in \mathcal{P}: \, \mathbf{D}\mbox{log } \boldsymbol p = \mathbf{D}\mbox{log } \boldsymbol \xi\},  \label{H1dual}
 \end{align}
In the sequel,  to unify and simplify the notation,  the models are referred to as $\mathcal{G}_{0}$ and $\mathcal{G}_{1}(\boldsymbol \xi)$. 

Notice that the alternative hypothesis (\ref{H1dual}) is expressed using only the canonical parameters $\mathbf{D}\mbox{log } \boldsymbol p$. Because of their variation independence from the mean-value parameters, the alternative does not incur any restrictions on $\mathbf{A}\tr \boldsymbol p$. Thus the discrepancy between the null and alternative hypotheses is quantified in terms of the canonical parameters only.

The statistical power of a hypothesis test characterizes the ability of the test to distinguish between the null and alternative hypotheses. For the goodness-of-fit test of $\mathcal{H}_0$ versus $\mathcal{H}_1$, the power is  the probability of rejection of the model $\mathcal{G}_0$ given that  the model $\mathcal{G}_1(\boldsymbol \xi)$ holds.  Because this definition refers to a hypothesis about two association structures rather than a hypothesis of goodness-of-fit of a particular distribution with respect to a certain model, a formal statistical power estimation would be very challenging.  A geometric framework for  power analysis of goodness-of-fit is introduced next. Looking at test power from a geometric perspective  highlights the intuition behind this concept and suggests a straightforward approach for the power estimation.

Take an $\epsilon > 0$ and a suitable goodness-of-fit statistic, for example the Pearson $X^2$. Denote by $\mathcal{T}_{0,\epsilon}$ the set of probability distributions in the simplex whose Pearson statistic $X^2$ computed under the model $\mathcal{G}_0$ is less than $\epsilon$:
\begin{equation}\label{tube0}
\mathcal{T}_{0,\epsilon} = \{\boldsymbol p \in \Delta_I:  \,\, X^2(\boldsymbol p) < \epsilon\}.
\end{equation}
In geometric terms $\mathcal{T}_{0,\epsilon}$ may be seen as a tube of radius $\epsilon$ around the surface $\mathcal{G}_{0}$. In statistical terms, one can say that this tube is comprised by distributions for which the null hypothesis $\mathcal{H}_0$ is not rejected. Thus $\mathcal{T}_{0,\epsilon}$ is an ``acceptance region'' of radius $\epsilon$, and the complement $\bar{\mathcal{T}}_{0,\epsilon} = \Delta_I \setminus {\mathcal{T}}_{0,\epsilon}$  is a ``rejection region''. 
Further, let 
\begin{equation}\label{tube1}
\mathcal{G}_{1,\epsilon}(\boldsymbol \xi) =   \mathcal{G}_{1}(\boldsymbol \xi) \cap \bar{\mathcal{T}}_{0,\epsilon} = \{\boldsymbol p \in \mathcal{G}_{1}(\boldsymbol \xi):  \,\, X^2(\boldsymbol p) \geq \epsilon \}.
\end{equation}
That is, $\mathcal{G}_{1,\alpha}(\boldsymbol \xi)$ consists of distributions in the model $\mathcal{G}_{1}(\boldsymbol \xi)$ for which the null hypothesis  $\mathcal{H}_0$ will be (correctly) rejected.  Both $\mathcal{G}_{1}(\boldsymbol \xi)$ and $\mathcal{G}_{1,\alpha}(\boldsymbol \xi)$ are surfaces in $\Delta_I$, with the latter being the proper subset of the former, and the ratio between their surface areas can be interpreted as the probability of rejection of $\mathcal{H}_0$ given $\mathcal{H}_1$ holds. Thus the following definition is proposed. 

\begin{definition}
Let $\mathcal{H}_0$  and $\mathcal{H}_1$  be the null and alternative hypotheses specified as a log-linear and log-affine models in (\ref{H0dual}) and (\ref{H1dual}), respectively. For a given $\epsilon > 0$ let $\mathcal{G}_{1,\epsilon}(\boldsymbol \xi)$ be the surface defined in (\ref{tube1}).  Then the ratio between the surface areas of $\mathcal{G}_{1,\epsilon}(\boldsymbol \xi)$ and $\mathcal{G}_{1}(\boldsymbol \xi)$,
$$Power_g  \coloneqq \frac{\mbox{Surface area of } \mathcal{G}_{1,\epsilon}(\boldsymbol \xi)}{\mbox{Surface area of } \mathcal{G}_{1}(\boldsymbol \xi)},$$
is called the geometric power of the goodness-of-fit test of $\mathcal{G}_{0}$ versus $\mathcal{G}_{1}(\boldsymbol \xi)$.
\end{definition}

\vspace{1mm}

In practice, $Power_g$ can be estimated using the following Monte-Carlo algorithm.  

\begin{center} \textbf{Monte-Carlo Geometric Power of $\mathcal{G}_{0}$ versus $\mathcal{G}_{1}(\boldsymbol \xi)$:} \end{center}
\begin{itemize}
\item[\textit{Input:}] an acceptance radius $\epsilon > 0$,  a number of replications  $N_{sim}$. 
\item For each $u \in 1, \dots, N_{sim}$, repeat \textit{Steps 1, 2}:
\item[\textit{Step 1 -}]  Generate a probability distribution $\boldsymbol \pi_u \in \mathcal{G}_1(\boldsymbol \xi)$. 
\item[\textit{Step 2 -}] Compute the MLE  $\hat{\boldsymbol \pi}_u$ under the null model $\mathcal{G}_0$ and collect the Pearson statistic $X^2(\boldsymbol \pi_u)$.
\item A Monte-Carlo estimate of geometric power is equal to:
$$\mbox{MC-}Power_g \coloneqq \frac{1}{N_{sim}} \sum_{u = 1}^{N_{sim}} \mathbb{I}(X^2(\boldsymbol \pi_u) \geq \epsilon ) = \frac{1}{N_{sim}} \sum_{u = 1}^{N_{sim}} \mathbb{I}(\boldsymbol \pi_u \in \mathcal{G}_{1,\epsilon}(\boldsymbol \xi)),$$
where $\mathbb{I}(\cdot)$ stands for the indicator function. \qed
\end{itemize}

\noindent \textit{Remark}: The geometric power can be computed with respect to the deviance statistic $G^2$ as well. This case is not considered for the sake of space.

\begin{example}\label{ExIndepGeom}
To illustrate the geometric power calculations, the model of independence for a $2 \times 2$ contingency table is revisited. In the dual representation,  this model can be formulated using the odds ratio constraint $p_{00}p_{11} / (p_{01}p_{10}) = 1$. As an alternative $\mathcal{G}_1(\boldsymbol \xi)$ consider the set of distributions  whose odds ratio is equal to $p_{00}p_{11} / (p_{01}p_{10}) = \xi$ for a given $0 < \xi \neq 1$.  Because of the variation independence between the odds ratio and the marginal distributions, the alternative model specification does not result in any restrictions on the row- or column sums $p_{0+}$, $p_{1+}$, $p_{+0}$, $p_{+1}$ \citep[Chapter 6]{TRbook2018}.  The geometric power calculations were performed for three alternatives, $\boldsymbol \xi = 1/10; \, 5; \, 50$, and for the acceptance radii $\epsilon \in (0.1, 0.4)$. The distribution generation in \text{Step 1} were performed in two steps: (1a) sampling a point $\boldsymbol p = (p_{00}, p_{01}, p_{10}, p_{11})\tr \in \Delta_4$; (1b) applying the iterative proportional fitting (IPF) procedure to find the distribution $\boldsymbol \pi = (\pi_{00}, \pi_{01}, \pi_{10}, \pi_{11})\tr \in \Delta_4$ such that $\pi_{0+} = p_{0+}$, $\pi_{0+} = p_{1+}$, $\pi_{0+} = p_{+0}$, $\pi_{0+}= p_{+1}$ and  $\pi_{00}\pi_{11} / (\pi_{01}\pi_{10}) = \xi$. In fact, the resulting distribution $\boldsymbol \pi$ is equal to the MLE under the model $\mathcal{G}_1(\boldsymbol \xi)$. The MLE in \text{Step 2} can be found in closed form \citep[cf.][]{BFH}. The resulting power curves are shown in Figures \ref{PowerIndepUniform} and \ref{PowerIndepJeffreys}. One can see that an increase of acceptance radius would make the rejection region smaller, and thus would lead to a smaller rejection probability (geometric power). Given the same acceptance radius, the rejection rate would be higher for an alternative which is further from the null model ($\boldsymbol \xi = 1$). That is, the geometric power increases as the effect size gets bigger. \qed
\end{example}

The proposed definition of geometric power describes a relationship between two surfaces in the probability simplex but does not take into account that the data on $\mathcal{I}$  represent a realization of a discrete random variable and, in fact, consists of absolute frequencies. In the sequel, unless specified otherwise, it is assumed that the data are collected using the multinomial sampling scheme, so the sample size $N$ is fixed in advance.  The Monte-Carlo based stochastic extension of the power concept introduced below describes a relationship between the model of interest and a realization of a multinomial distribution parameterized by $N$ and  some  $\boldsymbol p \in \mathcal{G}_1(\boldsymbol \xi)$.  Because for multinomial data the Pearson statistic of the MLE is asymptotically chi-squared as $N\to \infty$, the acceptance radius can be naturally specified using a quantile of the appropriate central chi-square distribution,  $\chi^2_{0; 1-\alpha}$, for some $\alpha \in (0,1)$. Essentially, a geometric goodness-of-fit can be assessed using the conventional chi-square test with $\alpha$ being the significance level.  In order to emphasize that multiple drawings from the same multinomial distribution can result in different stochastic realizations as frequency distributions, the MC-obtained  stochastic extension defined here  is referred to as the cumulative geometric power.   

\vspace{1mm}

\begin{center} \textbf{Monte-Carlo Cumulative Geometric Power of $\mathcal{G}_{0}$ versus $\mathcal{G}_{1}$:} \end{center}
\begin{itemize}
\item[\textit{Input:}] a sample size $N$,  a critical value $\alpha \in (0,1)$,  $\chi^2_{0; 1-\alpha}$,  a number of replications  $N_{sim}$. 
\item For each  $u \in 1, \dots, N_{sim}$ repeat \textit{Steps 1, 2, 3}:
\item[\textit{Step 1 -}]  Generate a probability distribution $\boldsymbol \pi_u \in \mathcal{G}_1(\boldsymbol \xi)$.
\item[\textit{Step 2 -}] Obtain a frequency distribution $\boldsymbol f_u$ as a realization of $Mult (N,\boldsymbol \pi_u)$.
\item[\textit{Step 3 -}] Compute the MLE  $\hat{\boldsymbol f}_u$ under the null model $\mathcal{G}_0$ and collect the Pearson statistic $X^2(\boldsymbol f_u)$.
\item A Monte-Carlo estimate of the power of the goodness-of-fit hypothesis test is computed as the empirical rejection rate, namely:
\begin{equation}\label{cumPow}
\mbox{MC-}Power_c \coloneqq \frac{1}{N_{sim}} \sum_{u = 1}^{N_{sim}} \mathbb{I}(X^2(\boldsymbol f_u) \geq \chi^2_{0; 1-\alpha}).
\end{equation}
\qed
\end{itemize}

The cumulative power analysis is relevant, for example, for creating the sample size-power tables,  used for selecting a sample size needed to reach a prespecified power  in testing (\ref{H0dual}) versus (\ref{H1dual}).  
In addition,  a cumulative analysis can be performed posteriori, that is,  to estimate a cumulative  power achieved by testing an observed frequency distribution, say $\boldsymbol f_0$.  For posteriori analysis,   the exponents of the canonical parameters of  $\boldsymbol p_0 = \boldsymbol f_0/(\boldsymbol 1\tr \boldsymbol f_0)$,  are used as the offset, $\boldsymbol \xi =   \exp \{\mathbf{D}\log \boldsymbol p_0\}$, and $N = \boldsymbol 1\tr \boldsymbol f_0$ is the observed total.

Both Monte-Carlo algorithms described in this section are relatively straightforward to implement.  The multinomial sampling in \textit{Step 2} can be performed using a statistical software. Numerous options are available for the MLE computation in \textit{Step 3}.  However, the distribution generation in \textit{Step 1} is quite non-trivial for a general log-linear model. Following the exponential family  approach used in Example \ref{ExIndepGeom}, the construction procedure can be conducted in two steps:
(1a) draw a $\boldsymbol p \in \Delta_{I}$ and compute the mean-value parameters $\mathbf{A}\tr\boldsymbol p$,  (1b) determine the unique distribution whose mean-value parameters are $\mathbf{A}\tr\boldsymbol p$ and the canonical parameters $\log \boldsymbol \xi$.  Equivalently, in order to generate a distribution $\boldsymbol \pi \in \mathcal{G}_1(\boldsymbol \xi)$, draw a $\boldsymbol p \in \Delta_{I}$,  compute the MLE $\hat{\boldsymbol p}$ of $\boldsymbol p$ under the model $\mathcal{G}_1(\boldsymbol \xi)$ and set  $\boldsymbol \pi = \hat{\boldsymbol p}$. Step (1a) can be implemented by using the mean-value parameters from a randomly drawn point from the probability simplex $\Delta_I$.  For instance, in the simulations in Example \ref{ExIndepGeom},  the Dirichlet distribution ${Dir}_I(\boldsymbol \alpha)$,  where $\boldsymbol \alpha = (1, \dots, 1)$ (uniform distribution on $\Delta_I$) or $\boldsymbol \alpha = (1/2, \dots, 1/2)$ (Jeffreys prior on $\Delta_I$) was used. The setup in Step (1b) naturally calls for using iterative scaling.   Moreover, the original iterative proportional fitting algorithm for  classical hierarchical log-linear models and its  generalization called the generalized iterative scaling  \citep {DarrochRatcliff} both have a clear geometric interpretation in terms of projections with respect to the Kullback-Leibler information divergence  \citep{Csiszar,  CsiszarGIS}.  The iterative scaling approach to constructing distributions from a log-linear model is considered in detail in the following section.

\section{Iterative scaling for sampling from a log-linear model}\label{MLEcomputeMine}

The classical Iterative Proportional Fitting (IPF) is a well-known procedure for the maximum likelihood estimation under traditional log-linear models, see, for example, \cite{Agresti2002} and references therein. Numerous extensions of IPF were developed in order to incorporate the MLE computation under different kinds of non-standard log-linear models, in particular, those allowing for non-negative but not necessarily 0-1 design matrices. Firstly, it was the GIS of \cite{DarrochRatcliff} and, later, its further generalizations, see   \cite{Herman1976iterative, Censor1987block, ByrneAlgorithms, Holte1990iterative,  Lent1991primal, WinklerDykstras}, among others. The Multiplicative Algebraic Reconstruction Techniques (MART), widely used in image reconstruction, is also a special type of iterative scaling  \citep[cf.][]{ByrneAlgorithms}.  Among other usages of IPF is a  population synthesis which aims at generating frequency distributions with a given set of marginals and a particular interaction structure. For example, in the small area estimation studies the IPF may be used to calculate distributions whose interactions coincide with those in the population while the marginals are retained from a specific sub-population.  Therefore, using the iterative scaling for the distribution generation in \textit{Step 1b}  would be a natural choice. However, a closer look at the algorithms mentioned above brings attention to the following three points. 

The first point concerns the fact that the proof of convergence of the original GIS was given under the assumption that all sums along columns of the design matrix  equal $1$:
\begin{equation}\label{sumTo1}
\sum_{j = 1}^J a_{ij}= 1, \quad \mbox{ for each } \,\, i \in 1, \dots, I.
\end{equation} 
In the studies of convergence of iterative scaling, the condition (\ref{sumTo1}) or its more general version $\boldsymbol 1\tr \mathbf{A} = C\boldsymbol 1\tr$, where $C = const > 0$, are almost always imposed. Some authors do it directly, among those are \cite{Herman1976iterative, Censor1987block, ByrneAlgorithms, Holte1990iterative, Byrne2009block, Byrne1997convergent, Abou2022image, WinklerDykstras}, while others, for example, \cite{Lent1991primal}, do not mention explicitly but require in the proofs of convergence.  Our extensive literature search on MART-type algorithms did not find any examples where the assumption $\boldsymbol 1 \in colspan (\mathbf{A})$ was completely relaxed, that is, not being used for the convergence proofs. Even a very thorough account of the simultaneous MART (SMART) given in \cite{Byrne2009block, ByrneAlgorithms} did not discuss any design matrices where the overall effect was not present. Moreover, the presence of the overall effect by design is simply anticipated in the image reconstruction context, where MART is employed. The second point is, because the assumption (\ref{sumTo1})  is necessary and sufficient for the presence of the genuine overall effect, $\boldsymbol 1 \in colspan (\mathbf{A})$, neither GIS nor (S)MART is suitable for the general log-linear models without the overall effect, for instance, the one in Example \ref{VaccineModel}.  
As the third point, notice that \cite{KRipf1} already developed an algorithm specifically designed to handle the absence of the overall effect. However, their proof of convergence was given for 0-1 design matrices and cannot be extended to those with non-negative integer entries \citep{KRipf1}, such as  (\ref{VaccineMMx}) in Example \ref{VaccineModel}.

The algorithm IPF($\gamma$, $\boldsymbol \xi$) proposed below is, on one hand, a generalization of Darroch and Ratcliff's GIS to models without the overall effect, and, on the other hand, an extension of  IPF($\gamma$) of \cite{KRipf1} that can be applied to non-negative integer design matrices. Introducing an additional parameter $\boldsymbol \xi > \boldsymbol 0$ is a further generalization, as both original algorithms were designed for $\boldsymbol \xi \equiv \boldsymbol 1$.

Let $\mathcal{G}(\mathbf{A})$ be a general log-linear model.  Unless otherwise specified, assume that $ \mathbf{A} \equiv \mathbf{A}/\| \mathbf{A} \|_1$, where $\| \mathbf{A} \|_1$ stands for the $L_1$ norm of $\mathbf{A}$  (its maximal row sum):
$$\| \mathbf{A} \|_1 = \underset{1 \leq j \leq I}{{max}} \sum_{j = 1}^J |a_{ij}|.$$ 
Write $A_1, \dots, A_J$ for the columns of $\mathbf{A}$, and consider a $\boldsymbol q \in \mathbb{R}^I_{\geq 0}$ such that $A_1\tr \boldsymbol q, \dots, A_J\tr \boldsymbol q > 0$. Let $\mathbf{D}$ denote a kernel basis matrix of $\mathbf{A}$.

\begin{center} \textbf{Iterative Proportional Fitting Algorithm IPF($\gamma$, $\boldsymbol \xi$):} \end{center}
\begin{itemize}
\item[] \textit{Input:} a design matrix $\mathbf{A}$ and its  kernel basis matrix $\mathbf{D}$; $\boldsymbol q \in \mathbb{R}^I_{\geq 0}$; $\gamma > 0$; $\boldsymbol \xi \in \mathbb{R}^I_{\geq 0}$.
\item \textit{Initialize:} set $n = 0$; choose $\boldsymbol \delta^{(0)} \in \mathbb{R}^I_{\geq 0}$ such that $\mathbf{D} \mbox{log } \boldsymbol \delta^{(0)}  = \mathbf{D} \log \boldsymbol \xi$.
\item \textit{Iterate for } $n \geq 0$: 
\begin{eqnarray} 
\delta_i^{(n+1)} &=& \delta_i^{(n)} \prod_{j = 1}^J \left[\gamma\frac{{A}_{j}\tr \boldsymbol{q}}{{A}_{j}\tr \boldsymbol{\delta}^{(n)}}\right]^{a_{ij}}
  \,\, \mbox{for all } i \in \mathcal{I}.  \label{RipfGamma} 
\end{eqnarray}
\end{itemize}

A formal proof of convergence of the sequence (\ref{RipfGamma}) is presented as Theorem \ref{ThGammaNew} in the Appendix. It is shown that for a $\gamma > 0$, as $n \to \infty$, the sequence  $\boldsymbol{\delta}_{\gamma}^{(n)}$ converges, and its limit $\boldsymbol{\delta}_{\gamma}^{*}$ is the unique solution to the system: 
\begin{equation*}
(i)  \hspace{2mm} \mathbf{A}\tr\boldsymbol{\delta}_{\gamma}^* = \gamma \mathbf{A}\tr \boldsymbol q, \quad (ii) \hspace{2mm} \mathbf{D} \mbox{log } \boldsymbol{\delta}_{\gamma}^* = \mathbf{D} \log \boldsymbol \xi.
\end{equation*}

As corollary, it is proved that  when $\mathcal{G}(\mathbf{A})$ is either a model for probabilities with the overall effect, $\boldsymbol 1 \in colspan (\mathbf{A})$, or a  model for intensities (with or without the overall effect) the MLE under this model can be computed using IPF($\gamma$, $\boldsymbol \xi$) by setting $\gamma = 1$ and $\boldsymbol \xi = \boldsymbol 1$. 
These corollary results  are not entirely new. Firstly, if the model matrix satisfies (\ref{sumTo1}), IPF($1$, $\boldsymbol 1$) coincides with  the generalized iterative scaling of \cite{DarrochRatcliff}.
Secondly, if $\boldsymbol 1 \in colspan (\mathbf{A})$, IPF($\gamma$, $\boldsymbol \xi$) is an instantiation of MART. Thus, in these two cases, the convergence is already known.  However, because in the context where MARTs are applied the overall effect is present by design, their convergence in the no-overall-effect case had not been considered. Therefore, the results  of  Theorem \ref{ThGammaNew} concerning the convergence of IPF($\gamma$, $\boldsymbol \xi$) when $\boldsymbol 1 \notin colspan (\mathbf{A})$  are  absolutely novel.  

Notice that in the case of probabilities, $\boldsymbol \delta \equiv \boldsymbol p$, Theorem \ref{ThGammaNew} does not guarantee that the sequence limit is a probability distribution, that is, $\boldsymbol1\tr \boldsymbol{\delta}_{\gamma}^{*}  = 1$. The two-stage procedure described next incorporates an additional normalization step and is an extension of the algorithm proposed by \cite{KRipf1} for relational models.

\begin{center} \textbf{Generalized Iterative Proportional Fitting Algorithm G-IPF($\boldsymbol \xi$):} \end{center}
\begin{itemize}
\item[] \textit{Input:} a design matrix $\mathbf{A}$ and its  kernel basis matrix $\mathbf{D}$; $\boldsymbol q \in \Delta_I$;  $\boldsymbol \xi \in \mathbb{R}^I_{\geq 0}$.
\item \textit{Initialize:} set $n = 0$; $\gamma^{(0)} = 1$; choose a $\boldsymbol \delta^{(0)} > \boldsymbol 0$ for which $\mathbf{D} \log \boldsymbol \delta^{(0)} = \mathbf{D} \log \boldsymbol \xi$.
\item \textit{Core step:} for $d \geq 0$, apply IPF($\gamma^{(d)}$, $\boldsymbol \xi$) to obtain $\tilde{\boldsymbol \delta}^{(d)}_{\gamma}$ that satisfies
$\mathbf{A}\tr \tilde{\boldsymbol \delta}_{\gamma^{(d)}} =  \gamma^{(d)} \mathbf{A}\tr\boldsymbol q$ and 
$\mathbf{D}\log \tilde{\boldsymbol \delta}_{\gamma^{(d)}} = \mathbf{D}\log {\boldsymbol \xi}$. 
\item \textit{Adjustment step:}  adjust $\gamma^{(d)}$ in such a way that after the $(d+1)$th Core step, the total of the limit $\tilde{\boldsymbol \delta}_{\gamma^{(d +1)}}$ is closer to 1 than the total of $\tilde{\boldsymbol \delta}_{\gamma^{(d)}}$:
$$|\boldsymbol 1\tr \tilde{\boldsymbol \delta}_{\gamma^{(d +1)}} -1 | < |\boldsymbol 1\tr \tilde{\boldsymbol \delta}_{\gamma^{(d)}} -1 |.$$
\item \textit{Iterate between Core step and Adjustment step} until $|\boldsymbol 1\tr \tilde{\boldsymbol \delta}_{\gamma^{(d +1)}} -1 |$ is smaller than a desired precision.
\end{itemize}

Theorem \ref{ThGammaNew}  and a result of \cite{Forcina2019} imply that
as $d \to \infty$, the sequence $\tilde{\boldsymbol \delta}_{\gamma^{(d)}}$ obtained from G-IPF ($\boldsymbol \xi$) converges, its limit, $\boldsymbol \delta^{(*)}$, is the unique solution to the system:
$$\mathbf{A}\tr \boldsymbol \delta^{(*)} =  \gamma^{(*)} \mathbf{A}\tr\boldsymbol q, \,\,  
\mathbf{D}\log\boldsymbol \delta^{(*)} = \mathbf{D}\log {\boldsymbol \xi}, \,\, \boldsymbol 1\tr \boldsymbol \delta^{(*)} = 1,$$
for a unique $\gamma^{(*)} >0$.  Therefore, the G-IPF($\boldsymbol \xi$) algorithm can be used to generate distributions  from  $\mathcal{G}_{1}(\boldsymbol \xi)$.  

Finally, as seen in the proof of Theorem \ref{ThGammaNew},  the sequence generated by G-IPF is built by successive projections with respect to the Bregman information divergence \citep{Bregman}. This result generalizes the geometric interpretation associated to IPF and GIS  given by  \cite{Csiszar, CsiszarGIS}.

In the next section, the geometric framework for power analysis is applied to  real data.

\section{An example of geometric power analysis}
\label{SectionExample}

The study Dia-Vacc (NCT04799808) was initiated in January 2021 in several nephrology centers in  Germany \citep{DIAVacc}.  The participants were initially vaccinated with two doses either of BNT162b2mRNA or 1273-mRNA vaccine, and those who developed the vaccine-specific antibodies were said to have a positive immune response to the vaccination. The study participants without a positive response were revaccinated, \textit{boostered}, after six months, and those who did not respond to the booster,  were given another one later. As a result, the study participants can be classified into four groups according to their vaccination profile, see Example \ref{VaccineModel} and   Figure \ref{TreeVaccine}. Denote by $\boldsymbol p = $ ($p_1$, $p_2$, $p_3$, $p_4)\tr$ the probabilities of corresponding paths, by  $\theta_1$ the probability of positive response and $\theta_0= 1-\theta_1$ the probability of non-response to a single vaccination. Then, the hypothesis of independence between outcomes of subsequent vaccinations, or in other words, of the non-existence of delayed immune response, can be expressed using the model (\ref{calves3}):
\begin{equation}\tag{\ref{calves3}}
p_{1} = \theta_0^3, \,\,  p_{2} = \theta_0^2 \theta_1, \,\, p_{3} = \theta_0\theta_1, \,\, p_{4} = \theta_1. 
\end{equation}
In this manuscript, only the data of kidney transplant recipients participating in the study will be analyzed. After three vaccination rounds, the observed frequencies were $\boldsymbol y$ = $(80,$ $12,$ $44,$ $64)\tr$ in each respective category.  As obtained in (\ref{MLEcalves3}), the MLEs for probabilities under the model (\ref{calves3}) are equal to 
$\hat{\boldsymbol p}$ = $((308/428)^3$, $308^2 \cdot 120/428^3$, $308 \cdot 120/428^2$, $120/428)\tr$ $\approx (0.373,$
$0.145,$ $0.202,$ $0.280)\tr$. The Pearson statistic $X^2 \approx 11.85$  and deviance $G^2 \approx 14.65$, on two degrees of freedom, indicate medium-strength evidence against the hypothesis of independence.  Assuming the significance level of $5\%$, the achieved (conventional) statistical power is about $88\%$ \citep{SilveyLMtest, AitchSilvey60}. 

Using the Monte-Carlo method of Section \ref{SectionPower}, two posteriori MC-based cumulative power estimates were obtained. The distributions on the simplex were drawn from the distribution $Dir_4(1)$ and $Dir_4(1/2)$, respectively. In each case, ten Monte-Carlo sequences of $10^4$ replications were generated. The average rejection rates were equal to $0.903$ and $0.845$, with the $95\%$ confidence intervals $(0.901, 0.905)$ and $(0.841, 0.849)$, respectively. The convergence of three of those sequences (for the sake of graphical clarity) is illustrated in Figures \ref{MCMCvaccine1} and \ref{MCMCvaccine2}.

Suppose a researcher would like to conduct a study with a similar design. The question about the study size needed to reach a prespecified power in rejecting the null hypothesis (\ref{calves3}) using the chi-square test will be addressed below using the geometric power framework. In order to carry out the apriori Monte-Carlo power calculations, one starts with specifying the null and alternative hypotheses in a dual form, using a set of constraints on odds ratios. For the hypothesis of independence (\ref{calves3}), a dual form can be obtained from the  matrix 
$$\mathbf{D} = \left( \begin{array}{rrrr}
                                   1 & -2 & 1 & 1\\
                                   0 & 1 & -2 & 1\\
                                   \end{array} \right).
                                 $$ 
In this case, the null hypothesis is formulated as: 
\begin{equation}\label{H0vaccine}
\mathcal{H}_0: \boldsymbol \pi \in \mathcal{G}_{0} = \{\boldsymbol p \in  \Delta_4: \,  p_1 p_3 p_4 / p_2^2 = 1, \quad  p_2 p_4/ p_3^2 =  1\}.
\end{equation}
As an alternative, one can take:
\begin{equation}\label{H1vaccine}
\mathcal{H}_1: \boldsymbol \pi \in \mathcal{G}_1 = \{\boldsymbol p \in  \Delta_4: \,  p_1 p_3 p_4 / p_2^2 = 1, \quad  p_2 p_4 / p_3^2 = k\},
\end{equation}
for a fixed $k > 0$. Power calculations for $k = 2$ and $k = 3$ are presented in Table \ref{PowerTableVaccine1}. According to the simulation results, in order to reach a power of $80\%$ (under a $5\%$ significance level), a sample size of about 490 would be needed for the alternative with $k = 2$ and of about 210 for the alternative with $k = 3$.  

For comparison, notice that in the observed data above, the sample size is $N = 200$ and $\boldsymbol p = (0.4, 0.06, 0.22, 0.32)$, and therefore, $p_1 p_3 p_4 / p_2^2 \approx 7.822$ and $p_2 p_4 / p_3^2 \approx 0.397$. Considering the magnitude of the observed ratios, the classical posteriori power estimate of $88 \%$ and the cumulative power estimates of $90\%$ and $84\%$ seem to be in a reasonable agreement with the power-sample size summary in Table \ref{PowerTableVaccine1}. \qed

\section{Conclusion}\label{SectionConclusion}

The statistical power analysis for log-linear models is routinely conducted using the noncentral chi-square distribution, with the goodness-of-fit index performing the role of effect size. This method overlooks the phenomenon that the same goodness-of-fit index corresponds to a variety of association structures expressed in terms of odds ratios.  In the context of power analysis, this property may lead to contradicting interpretations of the proximity between the null hypothesis and its alternative. This paper focuses on the power analysis of goodness-of-fit of a log-linear model  against a log-affine alternative. In this case, the deviation of the alternative hypothesis from the null one can be quantified exclusively using odds ratios. The power analysis is given a geometric interpretation, and a concept of geometric power is proposed.  A Monte-Carlo method for power analysis is described and demonstrated using several examples. An additional original contribution of the paper is the new proof of convergence of generalized interactive scaling for log-linear models. This result extends a previously known geometric interpretation of this algorithm as well. All computations were performed using the R Environment for statistical computing \citep{Rcite}.

\section*{Acknowledgements}

The author is grateful to Tam\'{a}s Rudas for his advice during the preparation of this manuscript. Tables \ref{ExTR1} and \ref{ExTR2} were shown during the joint presentation of Tam\'{a}s Rudas and the first author at the ERCIM-11 conference in London. The author would like to especially thank the associate editor and two anonymous reviewers for their suggestions about the initially submitted version of the manuscript.
The data set is courtesy of Christian Hugo and Julian Stumpf, from the University Clinic Carl-Gustav Carus, Technical University Dresden.

\section*{Conflict of interest}

On behalf of all authors, the corresponding author states that there is no conflict of interest.

\bibliography{KlimovaGeometryBIB20240422.bib}
\bibliographystyle{apacite}


\clearpage
 
\section*{Appendix: Convergence of  the IPF($\gamma$, $\boldsymbol \xi$) algorithm}

\setcounter{section}{4}



\begin{center} \textbf{Algorithm} IPF($\gamma$, $\boldsymbol \xi$): \end{center}
\begin{itemize}
\item[] \textit{Input:} $\gamma > 0$; $\boldsymbol \xi \in \mathbb{R}^I_{\geq 0}$; $\boldsymbol \delta^{(0)} \in \mathbb{R}^I_{\geq 0}$ such that $\mathbf{D} \mbox{log } \boldsymbol \delta^{(0)}  = \mathbf{D} \log \boldsymbol \xi$.
\item[] \textit{Iterate for } $n \geq 0$: 
\begin{equation} 
\delta_i^{(n+1)} = \delta_i^{(n)} \prod_{j = 1}^J \left[\gamma\frac{{A}_{j} \boldsymbol{q}}{{A}_{j} \boldsymbol{\delta}^{(n)}}\right]^{a_{ij}}
  \,\, \mbox{for all } i \in \mathcal{I}. \tag{\ref{RipfGamma}} 
\end{equation}
\end{itemize}

\vspace{3mm}

\noindent For $\boldsymbol t, \boldsymbol u \in \mathbb{R}^{|\mathcal{I}|}_{>0}\,$, let $\mathcal{D}(\boldsymbol t, \boldsymbol u)$ denote the Bregman divergence between $\boldsymbol t$ and $\boldsymbol u$:
\begin{equation} \label{BDdef} 
\mathcal{D}(\boldsymbol t, \boldsymbol u) = \sum_{i \in \mathcal{I}} t_i \mbox{log }(t_i/u_i) - (\sum_{i \in \mathcal{I}} t_i - \sum_{i \in \mathcal{I}} u_i).
\end{equation}
Then, for the sequence (\ref{RipfGamma}) obtained in IPF($\gamma$, $\boldsymbol \xi$) the following holds:

 \begin{lemma}\label{LemmaBregmanInequality}
 Set $\gamma = 1$. Then,
  $$\sum_{i = 1}^I \delta^{(n+1)}_i \leq \sum_{i = 1}^I \delta^{(n)}_i +\sum_{j = 1}^J (A_j\tr \boldsymbol q - A_j\tr\boldsymbol \delta^{(n)}),$$
  and for any $\boldsymbol z \in \mathbb{R}^I_{>0}$,
  $$\mathcal{D}(\boldsymbol z, \boldsymbol \delta^{(n+1)}) \leq  \mathcal{D}(\boldsymbol z, \boldsymbol \delta^{(n)}) - \mathcal{D}(\mathbf{A}\tr\boldsymbol q, \mathbf{A}\tr\boldsymbol \delta^{(n)}).$$
 \end{lemma}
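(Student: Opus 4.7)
The plan is to prove the two inequalities in order, with the first feeding into the second. Setting $\gamma = 1$ and writing $r_j = (A_j\tr \boldsymbol q)/(A_j\tr \boldsymbol \delta^{(n)})$, the update (\ref{RipfGamma}) becomes $\delta_i^{(n+1)} = \delta_i^{(n)} \prod_j r_j^{a_{ij}}$, and the normalization $\|\mathbf{A}\|_1 = 1$ imposed earlier in the section enforces $s_i := \sum_j a_{ij} \leq 1$ for every $i \in \mathcal{I}$, which is the decisive structural fact.

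For the first inequality, I would apply the weighted arithmetic--geometric mean inequality to the $J+1$ non-negative weights $a_{i1}, \dots, a_{iJ}, 1-s_i$ (which sum to $1$), attaching the slack weight $1-s_i$ to the constant value $1$:
$$\prod_j r_j^{a_{ij}} \;=\; 1^{1-s_i} \prod_j r_j^{a_{ij}} \;\leq\; (1-s_i) + \sum_j a_{ij} r_j.$$
Multiplying by $\delta_i^{(n)}$ and summing over $i \in \mathcal{I}$, the identity $\sum_i \delta_i^{(n)} a_{ij} = A_j\tr \boldsymbol \delta^{(n)}$ cancels each denominator in $r_j$, and the right-hand side collapses to $\sum_i \delta_i^{(n)} + \sum_j (A_j\tr \boldsymbol q - A_j\tr \boldsymbol \delta^{(n)})$, giving the first claim.

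For the second inequality, I would expand $\mathcal{D}(\boldsymbol z, \boldsymbol \delta^{(n+1)}) - \mathcal{D}(\boldsymbol z, \boldsymbol \delta^{(n)})$ directly from (\ref{BDdef}) as
$$-\sum_i z_i \log\frac{\delta_i^{(n+1)}}{\delta_i^{(n)}} \;+\; \sum_i \delta_i^{(n+1)} - \sum_i \delta_i^{(n)},$$
substitute $\log(\delta_i^{(n+1)}/\delta_i^{(n)}) = \sum_j a_{ij} \log r_j$, and swap the order of summation to convert the first term into $-\sum_j (A_j\tr \boldsymbol z) \log r_j$. Bounding the mass term by the first inequality replaces the remainder by $\sum_j (A_j\tr \boldsymbol q - A_j\tr \boldsymbol \delta^{(n)})$. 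Reading $\boldsymbol z$ as a point in the feasibility slice $\{\mathbf{A}\tr \boldsymbol z = \mathbf{A}\tr \boldsymbol q\}$ that IPF projects onto (the natural interpretation, since e.g.\ $\boldsymbol z = \boldsymbol \delta^{(n)}$ would otherwise violate the bound), one may replace $A_j\tr \boldsymbol z$ by $A_j\tr \boldsymbol q$ in the cross term, and the two pieces then regroup into exactly $-\mathcal{D}(\mathbf{A}\tr \boldsymbol q, \mathbf{A}\tr \boldsymbol \delta^{(n)})$.

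The main obstacle I expect is the AM--GM step: it only works because the $L_1$ normalization yields $1 - s_i \geq 0$, so the ``slack'' weight is admissible. In the classical GIS of \cite{DarrochRatcliff} one has $s_i = 1$ exactly and this slack vanishes; handling the deficit $1 - s_i > 0$ is precisely what lets the present argument accommodate general log-linear models without a genuine overall effect, i.e., $\boldsymbol 1 \notin colspan(\mathbf{A})$, where earlier proofs break down. The second inequality then takes the shape of a Bregman--Pythagorean bound, which is the form needed to iterate it across $n$ and extract monotone decrease of $\mathcal{D}(\boldsymbol z, \boldsymbol \delta^{(n)})$ in the convergence theorem.
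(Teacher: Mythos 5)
Your proof is correct and follows essentially the same route as the paper's: the weighted AM--GM inequality with the slack weight $1-\sum_{j} a_{ij}$ attached to the constant $1$ for the first claim, and a direct expansion of the Bregman difference combined with the first claim for the second. Your observation that the final inequality requires $\mathbf{A}\tr\boldsymbol z = \mathbf{A}\tr\boldsymbol q$ (rather than holding for arbitrary $\boldsymbol z \in \mathbb{R}^I_{>0}$ as the lemma is literally worded) matches how the paper itself resolves the issue --- its proof concludes only after setting $\boldsymbol z \equiv \boldsymbol q$.
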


\begin{proof}

The first statement:  
 \vspace{3mm}
  
First recall that, for $x_1, \dots x_n > 0$, and $w_i \geq 0$, such that $w_1 + \cdots + w_n = w$, according to
the weighted AM-GM (arithmetic mean - geometric mean) inequality, one has:
\begin{equation}\label{wAGinequality}
 \frac{w_1 x_1 + \dots w_n x_n}{w} \geq \left (x_1^{w_1} \cdots x_n^{w_n}  \right)^{1/w}.
 \end{equation}

Notice that $$\delta_i^{(n+1)} = \delta_i^{(n)} \prod_{j = 1}^J \left[\frac{{A}_{j}\tr \boldsymbol{q}}{{A}_{j}\tr \boldsymbol{\delta}^{(n)}}\right]^{a_{ij}} \cdot 1^{1-\sum_{j=1}^J a_{ij}},$$
and apply the inequality (\ref{wAGinequality}):
$$\delta_i^{(n+1)} \leq \delta_i^{(n)} \left( \sum_{j = 1}^J a_{ij} \frac{{A}_{j}\tr \boldsymbol{q}}{{A}_{j}\tr \boldsymbol{\delta}^{(n)}}  + {1-\sum_{j=1}^J a_{ij}}\right).$$   
After regrouping terms:                                  
$$\delta_i^{(n+1)} \leq \sum_{j = 1}^J a_{ij}\delta_i^{(n)}  \frac{{A}_{j}\tr \boldsymbol{q}}{{A}_{j}\tr \boldsymbol{\delta}^{(n)}}  + \delta_i^{(n)} -\sum_{j=1}^J a_{ij}\delta_i^{(n)}.$$                                             
Therefore,
\begin{align*}
\sum_{i = 1}^I\delta_i^{(n+1)} &\leq \sum_{i = 1}^I \sum_{j = 1}^J a_{ij}\delta_i^{(n)}  \frac{{A}_{j}\tr \boldsymbol{q}}{{A}_{j}\tr \boldsymbol{\delta}^{(n)}}  +  \sum_{i =1}^I\delta_i^{(n)} -  \sum_{i =1}^I\sum_{j=1}^J a_{ij}\delta_i^{(n)}\\
& = \sum_{i =1}^I\delta_i^{(n)} +  \sum_{j = 1}^J   \frac{{A}_{j}\tr \boldsymbol{q}}{{A}_{j}\tr \boldsymbol{\delta}^{(n)}} \sum_{i = 1}^I a_{ij}\delta_i^{(n)} - \sum_{j = 1}^J A_j\tr \boldsymbol \delta^{(n)} \\
& = \sum_{i =1}^I\delta_i^{(n)} +  \sum_{j = 1}^J   \frac{{A}_{j}\tr \boldsymbol{q}}{{A}_{j}\tr \boldsymbol{\delta}^{(n)}} A_j\tr \boldsymbol \delta^{(n)} - \sum_{j = 1}^J A_j\tr \boldsymbol \delta^{(n)}  = \sum_{i =1}^I\delta_i^{(n)} + \sum_{j = 1}^J (A_j\tr \boldsymbol q - A_j\tr\boldsymbol \delta^{(n)}).
\end{align*}                                           
 
The second statement:
\vspace{3mm}

For a $\boldsymbol z \in \mathbb{R}^I_{>0}$, one has:
 \begin{align*}
 \mathcal{D}(\boldsymbol z, \boldsymbol \delta^{(n+1)}) &= \sum_{i} z_i \log z_i -  \sum_{i} z_i \log \delta_i^{(n +1)} - (\sum_{i} z_i  - \sum_{i} \delta_i^{(n +1)}) \\
 &\leq \sum_{i} z_i \log z_i -  \sum_{i} z_i \log \delta_i^{(n +1)} - \sum_{i} z_i  + \sum_{i =1}^I\delta_i^{(n)} +  \sum_{j = 1}^J   {A}_{j}\tr \boldsymbol{q} - \sum_{j = 1}^J A_j\tr \boldsymbol \delta^{(n)}  \\
 &= \sum_{i} z_i \log z_i  - \sum_{i} z_i \log  \delta_i^{(n)} \prod_{j = 1}^J \left[\frac{{A}_{j}\tr \boldsymbol{q}}{{A}_{j}\tr \boldsymbol{\delta}^{(n)}}\right]^{a_{ij}} \\
 &- \sum_{i} z_i  + \sum_{i =1}^I\delta_i^{(n)} +  \sum_{j = 1}^J   {A}_{j}\tr \boldsymbol{q} - \sum_{j = 1}^J A_j\tr \boldsymbol \delta^{(n)}\\
 &= \sum_{i} z_i \log z_i  - \sum_{i} z_i \log  \delta_i^{(n)} -   \sum_{i} z_i \sum_{j = 1}^J{a_{ij}}  \log \left[\frac{{A}_{j}\tr \boldsymbol{q}}{{A}_{j}\tr \boldsymbol{\delta}^{(n)}}\right] \\
 &- \sum_{i} z_i  + \sum_{i =1}^I\delta_i^{(n)} +  \sum_{j = 1}^J   {A}_{j}\tr \boldsymbol{q} - \sum_{j = 1}^J A_j\tr \boldsymbol \delta^{(n)}\\
 &= \sum_{i} z_i \log (z_i/\delta_i^{(n)}) - (\sum_{i} z_i  - \sum_{i =1}^I\delta_i^{(n)})-   \sum_{i} z_i \sum_{j = 1}^J{a_{ij}}  \log \left[\frac{{A}_{j}\tr \boldsymbol{q}}{{A}_{j}\tr \boldsymbol{\delta}^{(n)}}\right] \\
 &+  \sum_{j = 1}^J   {A}_{j}\tr \boldsymbol{q} - \sum_{j = 1}^J A_j\tr \boldsymbol \delta^{(n)}\\
 &= \mathcal{D}(\boldsymbol z, \boldsymbol \delta^{(n)}) - 
 \sum_{j = 1}^J   \log \left[\frac{{A}_{j}\tr \boldsymbol{q}}{{A}_{j}\tr \boldsymbol{\delta}^{(n)}}\right]\sum_{i} z_i{a_{ij}} + \sum_{j = 1}^J   {A}_{j}\tr \boldsymbol{q} - \sum_{j = 1}^J A_j\tr \boldsymbol \delta^{(n)}  \\
 & = \mathcal{D}(\boldsymbol z, \boldsymbol \delta^{(n)}) - 
 \sum_{j = 1}^J  A_j\tr \boldsymbol z \cdot \log \left[\frac{{A}_{j}\tr \boldsymbol{q}}{{A}_{j}\tr \boldsymbol{\delta}^{(n)}}\right] + \sum_{j = 1}^J   {A}_{j}\tr \boldsymbol{q} - \sum_{j = 1}^J A_j\tr \boldsymbol \delta^{(n)} .
  \end{align*}
 
 Therefore, after taking $\boldsymbol z \equiv \boldsymbol q$,
 $$\mathcal{D}(\boldsymbol q, \boldsymbol \delta^{(n+1)}) \leq  \mathcal{D}(\boldsymbol q, \boldsymbol \delta^{(n)}) - \mathcal{D}(\mathbf{A}\tr\boldsymbol q, \mathbf{A}\tr\boldsymbol \delta^{(n)}).$$
 
\end{proof}

\vspace{3mm}

\begin{theorem}\label{ThGammaNew}
Fix a $\gamma > 0$. Then, as $n \to \infty$, the sequence  $\boldsymbol{\delta}_{\gamma}^{(n)}$ converges, and its limit $\boldsymbol{\delta}_{\gamma}^{*}$ is the unique solution to the system: 
\begin{tabbing}
(i) \= \hspace{2mm} $\mathbf{A}\tr\boldsymbol{\delta}_{\gamma}^* = \gamma \mathbf{A}\tr \boldsymbol q$, \\
(ii)\> \hspace{3mm} $\mathbf{D} \mbox{log } \boldsymbol{\delta}_{\gamma}^* = \mathbf{D} \log \boldsymbol \xi$.\\
\end{tabbing}
\end{theorem}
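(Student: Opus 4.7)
The plan is to combine three ingredients: (a) the iteration is engineered so that the canonical parameters $\mathbf{D}\log\boldsymbol\delta$ are invariant, so condition (ii) is automatic at every step; (b) a reduction to the $\gamma=1$ case lets me apply Lemma \ref{LemmaBregmanInequality} and squeeze convergence of the mean-value parameters out of a telescoping Bregman-divergence inequality; and (c) variation independence of the pair $(\mathbf{A}\tr\boldsymbol p, \mathbf{D}\log\boldsymbol p)$ upgrades convergence of the sufficient statistics to convergence of the iterates themselves, and pins down uniqueness of the limit.

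First I would verify invariance of the canonical parameters. From the update rule (\ref{RipfGamma}), $\log\boldsymbol\delta^{(n+1)}-\log\boldsymbol\delta^{(n)}$ lies in $\mathrm{colspan}(\mathbf{A})$, since its $i$-th coordinate equals $\sum_j a_{ij}\log\bigl[\gamma A_j\tr\boldsymbol q/ A_j\tr\boldsymbol\delta^{(n)}\bigr]$. Because $\mathbf{D}\mathbf{A}=\mathbf{0}$, multiplying on the left by $\mathbf{D}$ annihilates this increment, so $\mathbf{D}\log\boldsymbol\delta^{(n)}=\mathbf{D}\log\boldsymbol\delta^{(0)}=\mathbf{D}\log\boldsymbol\xi$ for every $n$ by induction. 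Thus (ii) holds at every iterate and, consequently, at any accumulation point.

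Next I would reduce to $\gamma=1$. Setting $\tilde{\boldsymbol q}\coloneqq\gamma\boldsymbol q$, the $\gamma$-update (\ref{RipfGamma}) is exactly the $\gamma=1$ update with data vector $\tilde{\boldsymbol q}$. Applying Lemma \ref{LemmaBregmanInequality} with $\boldsymbol z=\tilde{\boldsymbol q}$ yields
\[
\mathcal{D}(\gamma\boldsymbol q,\boldsymbol\delta^{(n+1)})\leq \mathcal{D}(\gamma\boldsymbol q,\boldsymbol\delta^{(n)})-\mathcal{D}(\gamma\mathbf{A}\tr\boldsymbol q,\mathbf{A}\tr\boldsymbol\delta^{(n)}).
\]
Since Bregman divergence is non-negative, the left-hand sequence is non-increasing and bounded below, hence convergent. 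Telescoping forces $\sum_n \mathcal{D}(\gamma\mathbf{A}\tr\boldsymbol q,\mathbf{A}\tr\boldsymbol\delta^{(n)})<\infty$ and therefore $\mathbf{A}\tr\boldsymbol\delta^{(n)}\to\gamma\mathbf{A}\tr\boldsymbol q$ by the identification-of-indiscernibles property of $\mathcal{D}$.

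The last step is to pass from convergence of the sufficient statistics to convergence of the iterates. Boundedness of $\mathcal{D}(\gamma\boldsymbol q,\boldsymbol\delta^{(n)})$ keeps $\boldsymbol\delta^{(n)}$ inside a compact region (it prevents coordinates for which $q_i>0$ from escaping to $0$ or $\infty$; coordinates with $q_i=0$ are controlled by the canonical-parameter constraint and the assumption $A_j\tr\boldsymbol q>0$). Any convergent subsequence has a limit $\boldsymbol\delta^{*}_{\gamma}$ satisfying (i) by the previous step and (ii) by invariance. For uniqueness, note that $\mathcal{G}(\mathbf{A})$ is an exponential family, so every strictly positive vector is uniquely determined by the pair of its mean-value parameters $\mathbf{A}\tr\boldsymbol\delta$ and canonical parameters $\mathbf{D}\log\boldsymbol\delta$, which are variation independent. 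Hence all subsequential limits coincide, and the full sequence converges to the unique $\boldsymbol\delta^{*}_{\gamma}$ solving (i)--(ii). The main obstacle I anticipate is carefully ruling out the degenerate drift of coordinates where $q_i=0$; the assumptions $A_j\tr\boldsymbol q>0$ and the preserved canonical-parameter equations are precisely what exclude this, and this is where the generalization beyond the $\boldsymbol 1\in\mathrm{colspan}(\mathbf{A})$ setting must be argued with care.
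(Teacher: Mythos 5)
Your proposal follows essentially the same route as the paper's proof: invariance of $\mathbf{D}\log\boldsymbol\delta^{(n)}$ via $\mathbf{D}\mathbf{A}=\mathbf{0}$, reduction of the general $\gamma$ case to $\gamma=1$ by replacing $\boldsymbol q$ with $\gamma\boldsymbol q$, and the monotone/telescoping Bregman-divergence inequality from Lemma \ref{LemmaBregmanInequality} to force $\mathbf{A}\tr\boldsymbol\delta^{(n)}\to\gamma\mathbf{A}\tr\boldsymbol q$. The only difference is that you spell out the final passage from convergence of the mean-value parameters to convergence and uniqueness of the iterates (compactness, subsequential limits, and the mixed parameterization), a step the paper asserts without elaboration, so your version is if anything slightly more complete.
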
    
 
\begin{proof}

(i) First, let $\gamma =1$. By Lemma \ref{LemmaBregmanInequality}, the sequence $\mathcal{D}(\boldsymbol q, \boldsymbol \delta^{(n)})$  is monotone decreasing. Therefore, because $\mathcal{D}(\boldsymbol q, \boldsymbol \delta^{(n)}) \geq 0$,  one has that $\mathcal{D}(\mathbf{A}\tr\boldsymbol q, \mathbf{A}\tr\boldsymbol \delta^{(n)}) \to 0$ as $n \to \infty$. By the properties of Bregman divergence, the latter implies that 
 $\mathbf{A}\tr\boldsymbol \delta^{(n)} \to \mathbf{A}\tr\boldsymbol q$ \citep{Bregman}.

 In more generality, when $0 < \gamma \neq 1$,  after replacing $\boldsymbol q$ with $\gamma \boldsymbol q$, and, respectively, $\boldsymbol{\delta}^{(n)}$ with $\boldsymbol{\delta}_{\gamma}^{(n)}$, one has 
   $$\mathcal{D}(\boldsymbol z, \boldsymbol \delta^{(n+1)}) \leq \mathcal{D}(\boldsymbol z, \boldsymbol \delta^{(n)}) - 
 \sum_{j = 1}^J  A_j\tr \boldsymbol z \cdot \log \left[\frac{\gamma{A}_{j}\tr \boldsymbol{q}}{{A}_{j}\tr \boldsymbol{\delta}^{(n)}}\right] + \sum_{j = 1}^J  \gamma {A}_{j}\tr \boldsymbol{q} - \sum_{j = 1}^J A_j\tr \boldsymbol \delta^{(n)},$$
 and for a $\boldsymbol z \in \mathbb{R}^I_{>0}$, that $\mathbf{A}\tr \boldsymbol z = \gamma \mathbf{A}\tr \boldsymbol q$:
 $$\mathcal{D}(\boldsymbol z, \boldsymbol \delta^{(n+1)}) \leq  \mathcal{D}(\boldsymbol q, \boldsymbol \delta^{(n)}) - \mathcal{D}(\gamma \mathbf{A}\tr\boldsymbol q, \mathbf{A}\tr\boldsymbol \delta^{(n)}).$$
Analogous to the above, $\mathbf{A}\tr\boldsymbol \delta^{(n)} \to \gamma \mathbf{A}\tr\boldsymbol q$, so one concludes that the sequence $\boldsymbol{\delta}_{\gamma}^{(n)}$ converges and its limit, $\boldsymbol{\delta}_{\gamma}^*$, satisfies 
$$\mathbf{A}\tr\boldsymbol{\delta}_{\gamma}^* = \gamma\boldsymbol A\tr \boldsymbol{q}.$$ 
(ii) The proof is by induction. Since $\mathbf{D} \mbox{log } \boldsymbol {\delta}_{\gamma}^{(0)} = \mathbf{D} \log \boldsymbol \xi$,  the statement holds for $n = 0$. 

Further, suppose $\mathbf{D} \mbox{log } \boldsymbol{\delta}_{\gamma}^{(d)} = \mathbf{D} \log \boldsymbol \xi$ for an $n > 0$. Set $C_j = \frac{\gamma {A}_{j}\tr \boldsymbol{q}}{{A}_{j}\tr \boldsymbol{\delta}_{\gamma}^{(n)}}$. Then, 
\begin{eqnarray*}
\mathbf{D} \mbox{log } \boldsymbol{\delta}_{\gamma}^{(n+1)} &=& \mathbf{D} \mbox{log }\left[\begin{array}{c}  
{\delta}_{\gamma,1}^{(n)}\cdot \prod_{j = 1}^J C_j^{a_{1j}}\\
{\delta}_{\gamma,2}^{(n)}\cdot\prod_{j = 1}^J C_j^{a_{2j}}\\
\vdots \\
{\delta}_{\gamma,I}^{(n)}\cdot \prod_{j = 1}^J C_j^{a_{Ij}}
\end{array}\right] = \mathbf{D} \left[\begin{array}{c}  
\mbox{log }{\delta}_{\gamma,1}^{(n)}+ \sum_{j = 1}^J{a_{1j}}\mbox{log } C_j\\
\mbox{log }{\delta}_{\gamma,2}^{(n)}+ \sum_{j = 1}^J{a_{2j}}\mbox{log } C_j\\
\vdots \\
\mbox{log }{\delta}_{\gamma,I}^{(n)}+\sum_{j=1}^J{a_{Ij}}\mbox{log } C_j
\end{array}\right] \\
& & \\
&=&  \mathbf{D} \mbox{log } \boldsymbol{\delta}_{\gamma}^{(n)} + \sum_{j = 1}^J\mbox{log } C_j \mathbf{D}A_j = \mathbf{D} \log \boldsymbol \xi,
\end{eqnarray*}
because $\mathbf{D}$ is a kernel basis matrix and thus $\mathbf{D}A_{j}= \boldsymbol 0$.
Therefore, $\mathbf{D} \mbox{log } \boldsymbol{\delta}_{\gamma}^{(n)} = \mathbf{D} \log \boldsymbol \xi$ for all $n \geq 0$, and, by continuity of matrix multiplication and logarithm, $\mathbf{D} \mbox{log } \boldsymbol{\delta}_{\gamma}^{*} = \mathbf{D} \log \boldsymbol \xi$. This completes the proof.
 \end{proof}


\begin{corollary} \label{G_IPFepsConvCorP}
Let $\mathcal{G}(\mathbf{A})$ be a general log-linear model for intensities whose design matrix $\mathbf{A}$ satisfies the properties specified in the beginning of this section. Suppose $\boldsymbol y$ is a realization of $Pois(\boldsymbol \lambda)$ and assume that the MLE $\hat{\boldsymbol \lambda}$ given $\boldsymbol y$ exists. 
The sequence
$\{\boldsymbol{\lambda}^{(n)}\}  \in \mathbb{R}^I_{\geq 0}$ obtained as:
\begin{align} 
&\boldsymbol \lambda^{(n)} = \boldsymbol 1, \nonumber \\
&\lambda_i^{(n+1)} = \lambda_i^{(n)} \prod_{j = 1}^J \left[\frac{{A}_{j}\tr \boldsymbol{y}}{{A}_{j}\tr \boldsymbol{\lambda}^{(n)}}\right]^{a_{ij}}
  \,\, \mbox{for all } i \in \mathcal{I}, n \geq 0,   \label{RipfGammaP} 
\end{align}
converges and its limit is $\boldsymbol \lambda^{(*)}$ is equal to the MLE   $\hat{\boldsymbol \lambda}$ under the model $\mathcal{G}(\mathbf{A})$.
 \end{corollary}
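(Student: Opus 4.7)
The plan is to derive this corollary directly from Theorem \ref{ThGammaNew} by identifying the iteration in (\ref{RipfGammaP}) with a specific instance of the algorithm IPF($\gamma$, $\boldsymbol \xi$). Specifically, I would set $\gamma = 1$, $\boldsymbol \xi = \boldsymbol 1$, and $\boldsymbol q = \boldsymbol y$. Under these choices, the update rule in IPF($\gamma$, $\boldsymbol \xi$) reduces exactly to the recursion given in (\ref{RipfGammaP}).

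First I would verify that the initialization $\boldsymbol \lambda^{(0)} = \boldsymbol 1$ is admissible, i.e.\ that it satisfies the entry condition $\mathbf{D} \log \boldsymbol \lambda^{(0)} = \mathbf{D} \log \boldsymbol \xi$ required by IPF($\gamma$, $\boldsymbol \xi$). With $\boldsymbol \xi = \boldsymbol 1$, both sides equal $\boldsymbol 0$, so the condition holds trivially. It also needs to be checked that $A_j\tr \boldsymbol y > 0$ for each $j$, which is guaranteed by the assumption that the MLE $\hat{\boldsymbol \lambda}$ exists (otherwise the sufficient statistics would place $\hat{\boldsymbol \lambda}$ on the boundary and the MLE system (\ref{MLEsysP}) would be inconsistent with $\hat{\boldsymbol \lambda} \in \mathcal{P}$).

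Next I would invoke Theorem \ref{ThGammaNew} to conclude that $\boldsymbol \lambda^{(n)}$ converges to a limit $\boldsymbol \lambda^{*}$ which is the unique solution of
\begin{equation*}
\mathbf{A}\tr \boldsymbol \lambda^{*} = \mathbf{A}\tr \boldsymbol y, \qquad \mathbf{D} \log \boldsymbol \lambda^{*} = \boldsymbol 0.
\end{equation*}
These are exactly the Poisson MLE equations (\ref{MLEsysP}). Since $\hat{\boldsymbol \lambda}$ exists by hypothesis and is, by \cite{AitchSilvey60}, the unique solution to this system, it follows that $\boldsymbol \lambda^{*} = \hat{\boldsymbol \lambda}$, completing the argument.

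The proof is therefore almost immediate once Theorem \ref{ThGammaNew} is available; there is no genuine obstacle at the level of the corollary itself. The only care needed is to match the bookkeeping between the general algorithm and the special Poisson setting (the role of $\gamma = 1$, the trivial offset $\boldsymbol \xi = \boldsymbol 1$, and identifying the sufficient-statistics equation $\mathbf{A}\tr \boldsymbol \lambda = \mathbf{A}\tr \boldsymbol y$ with Birch's theorem for the Poisson case). All of the heavy lifting, including the monotone decrease of the Bregman divergence established in Lemma \ref{LemmaBregmanInequality} and the invariance of the canonical parameters along the sequence, has already been done in proving Theorem \ref{ThGammaNew}.
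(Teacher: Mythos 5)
Your proposal matches the paper's proof exactly: the paper's argument is simply to apply Theorem \ref{ThGammaNew} with $\boldsymbol q \equiv \boldsymbol y$, $\boldsymbol \xi \equiv \boldsymbol 1$, $\gamma = 1$ and note that the limit satisfies (\ref{MLEsysP}). Your additional checks (admissibility of the all-ones initialization and positivity of $A_j\tr \boldsymbol y$) are sensible bookkeeping that the paper leaves implicit.
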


\begin{proof}

\noindent Apply Theorem \ref{ThGammaNew} to $\boldsymbol q \equiv \boldsymbol y$, $\boldsymbol \xi \equiv \boldsymbol 1$, $\gamma = 1$. The limit satisfies (\ref{MLEsysP}), as required.
\end{proof}

\vspace{3mm}

\begin{corollary} \label{G_IPFepsConvCorM}
Let $\mathcal{G}(\mathbf{A})$ be a general log-linear model for probabilities with $\boldsymbol 1 \in colspan (\mathbf{A})$. Suppose $\boldsymbol y$ is a realization of $Mult(N, \boldsymbol p)$ and assume that the MLE $\hat{\boldsymbol p}$ given $\boldsymbol y$ exists. Let $\boldsymbol q = \boldsymbol y/N$. Then, the sequence
$\{\boldsymbol{p}^{(n)}\}  \in \mathbb{R}^I_{\geq 0}$ obtained as:
\begin{align} 
&\boldsymbol p^{(n)} = \boldsymbol 1, \nonumber \\
&p_i^{(n+1)} = p_i^{(n)} \prod_{j = 1}^J \left[\frac{{A}_{j}\tr \boldsymbol{q}}{{A}_{j}\tr \boldsymbol{p}^{(n)}}\right]^{a_{ij}}
  \,\, \mbox{for all } i \in \mathcal{I}, n \geq 0,   \label{RipfGammaM} 
\end{align}
converges and its limit is $\boldsymbol p^{(*)}$ is equal to the MLE   $\hat{\boldsymbol p}$ under the model $\mathcal{G}(\mathbf{A})$.
 \end{corollary}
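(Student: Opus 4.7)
The plan is to specialize Theorem \ref{ThGammaNew} to the parameter choices $\boldsymbol q = \boldsymbol y/N$, $\boldsymbol \xi = \boldsymbol 1$, and $\gamma = 1$, and then verify that the resulting limit coincides with the multinomial MLE as characterized by system (\ref{MLEsysM}). The starting point $\boldsymbol p^{(0)} = \boldsymbol 1$ trivially satisfies the initialization requirement $\mathbf{D}\log \boldsymbol p^{(0)} = \boldsymbol 0 = \mathbf{D}\log \boldsymbol 1$, so the iteration (\ref{RipfGammaM}) is literally an instance of IPF($1$, $\boldsymbol 1$). Theorem \ref{ThGammaNew} then gives convergence of $\boldsymbol p^{(n)}$ to a limit $\boldsymbol p^{(*)}$ satisfying
$$\mathbf{A}\tr \boldsymbol p^{(*)} = \mathbf{A}\tr (\boldsymbol y/N), \qquad \mathbf{D}\log \boldsymbol p^{(*)} = \boldsymbol 0.$$

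The main obstacle is the normalization: the theorem does not by itself guarantee that $\boldsymbol 1\tr \boldsymbol p^{(*)} = 1$. This is precisely where the hypothesis $\boldsymbol 1 \in colspan(\mathbf{A})$ enters. Choose $\boldsymbol c \in \mathbb{R}^J$ with $\mathbf{A}\boldsymbol c = \boldsymbol 1$; then
$$\boldsymbol 1\tr \boldsymbol p^{(*)} = \boldsymbol c\tr \mathbf{A}\tr \boldsymbol p^{(*)} = \boldsymbol c\tr \mathbf{A}\tr (\boldsymbol y/N) = \boldsymbol 1\tr (\boldsymbol y/N) = 1,$$
where the last equality uses $\boldsymbol 1\tr \boldsymbol y = N$. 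Hence $\boldsymbol p^{(*)} \in \Delta_I$.

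Finally, collecting the three relations $\mathbf{A}\tr \boldsymbol p^{(*)} = \mathbf{A}\tr (\boldsymbol y/N)$, $\mathbf{D}\log \boldsymbol p^{(*)} = \boldsymbol 0$, and $\boldsymbol 1\tr \boldsymbol p^{(*)} = 1$, I would observe that these are exactly the defining equations (\ref{MLEsysM}) for the multinomial MLE when the overall effect is present (so that the adjustment factor satisfies $\gamma \equiv 1$, per the discussion in Section \ref{sectionNotation}). Since the MLE $\hat{\boldsymbol p}$ is by assumption the unique solution to this system, one concludes $\boldsymbol p^{(*)} = \hat{\boldsymbol p}$, completing the argument.
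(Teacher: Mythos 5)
Your proof is correct and follows essentially the same route as the paper's: both specialize Theorem \ref{ThGammaNew} with $\boldsymbol q = \boldsymbol y/N$, $\boldsymbol\xi = \boldsymbol 1$, $\gamma = 1$, then deduce the normalization $\boldsymbol 1\tr\boldsymbol p^{(*)} = 1$ from $\boldsymbol 1 \in colspan(\mathbf{A})$ together with $\boldsymbol 1\tr\boldsymbol q = 1$, and identify the limit with $\hat{\boldsymbol p}$ via the uniqueness of the solution to (\ref{MLEsysM}). You merely make explicit (via the vector $\boldsymbol c$ with $\mathbf{A}\boldsymbol c = \boldsymbol 1$) the one-line normalization argument that the paper leaves implicit.
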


\begin{proof}
\noindent Taking $\boldsymbol \xi \equiv \boldsymbol 1$, $\gamma = 1$  in Theorem \ref{ThGammaNew}, one sees that the sequence limit satisfies the first two equations in (\ref{MLEsysM}). Because $\boldsymbol 1\in colspan(\mathbf{A})$ and $\boldsymbol 1\tr \boldsymbol q = 1$, it also holds that   $\boldsymbol 1\tr \boldsymbol p^{(*)} = 1$. 
\end{proof}

\clearpage
 
\section*{Supplementary Material}

\begin{figure}[ht]
\begin{center}
\includegraphics[scale=0.5]{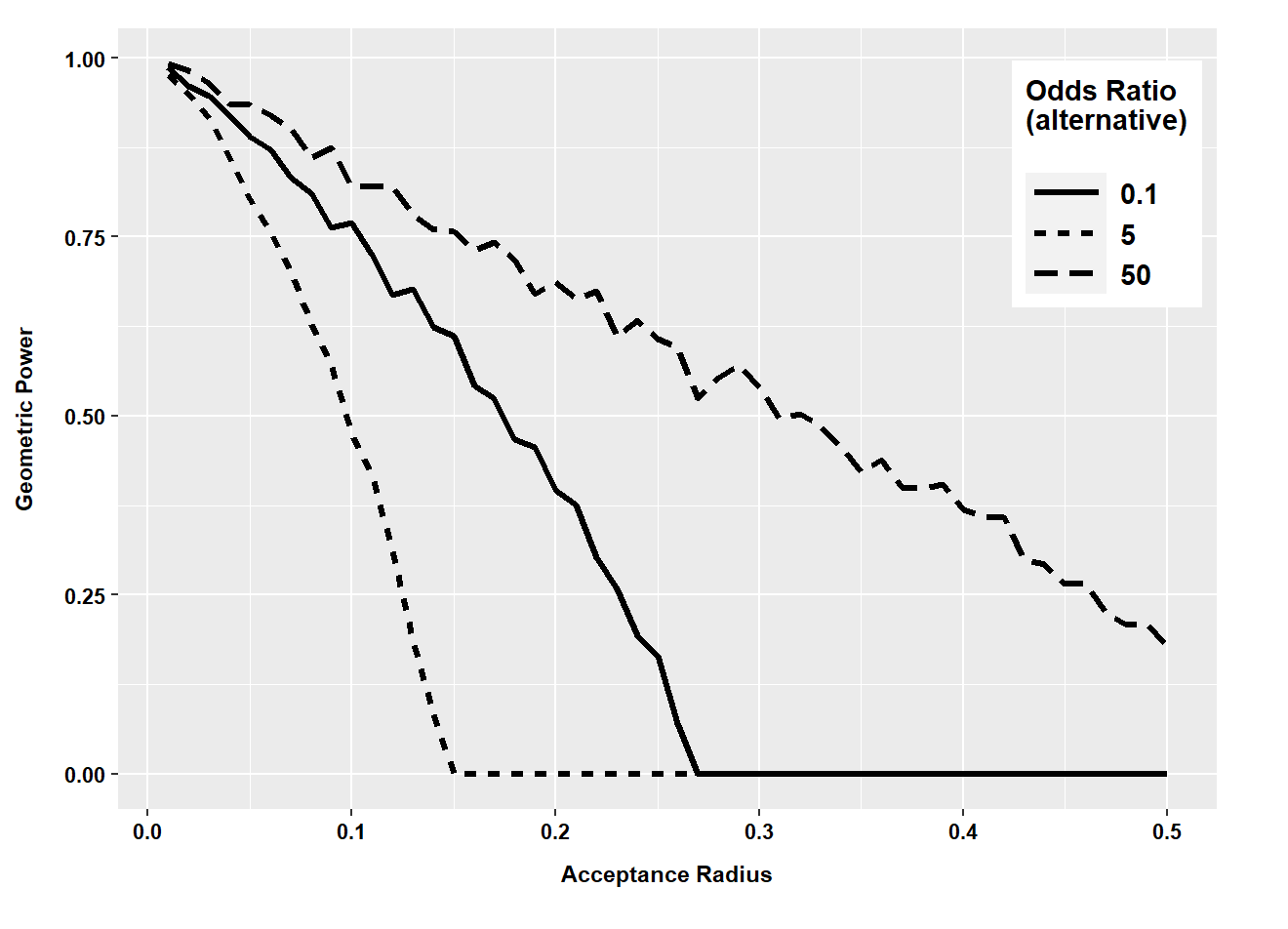}
\end{center}
\caption{Geometric power of the goodness-of-fit test of $2\times 2$ independence for different acceptance radii. Samples from $\Delta_4$ were drawn using $Dir_4(1)$}
\label{PowerIndepUniform}
\end{figure}

\begin{figure}
\begin{center}
\includegraphics[scale=0.5]{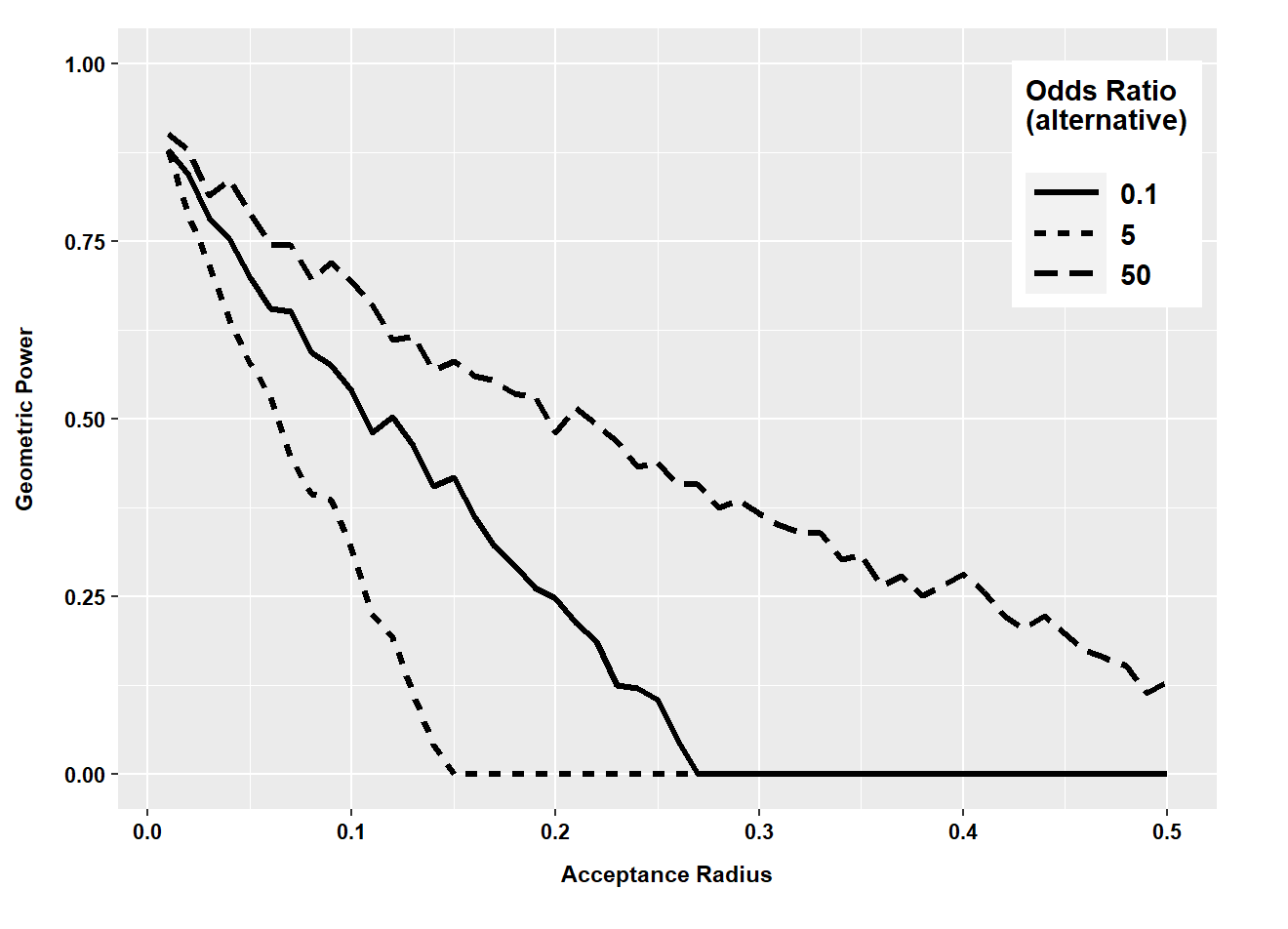}
\end{center}
\caption{Geometric power of the goodness-of-fit test of $2\times$ independence for different acceptance radii. Samples from $\Delta_4$ were drawn $Dir_4(1/2)$}
\label{PowerIndepJeffreys}
\end{figure}

\begin{figure}
\begin{center}
\includegraphics[scale=0.5]{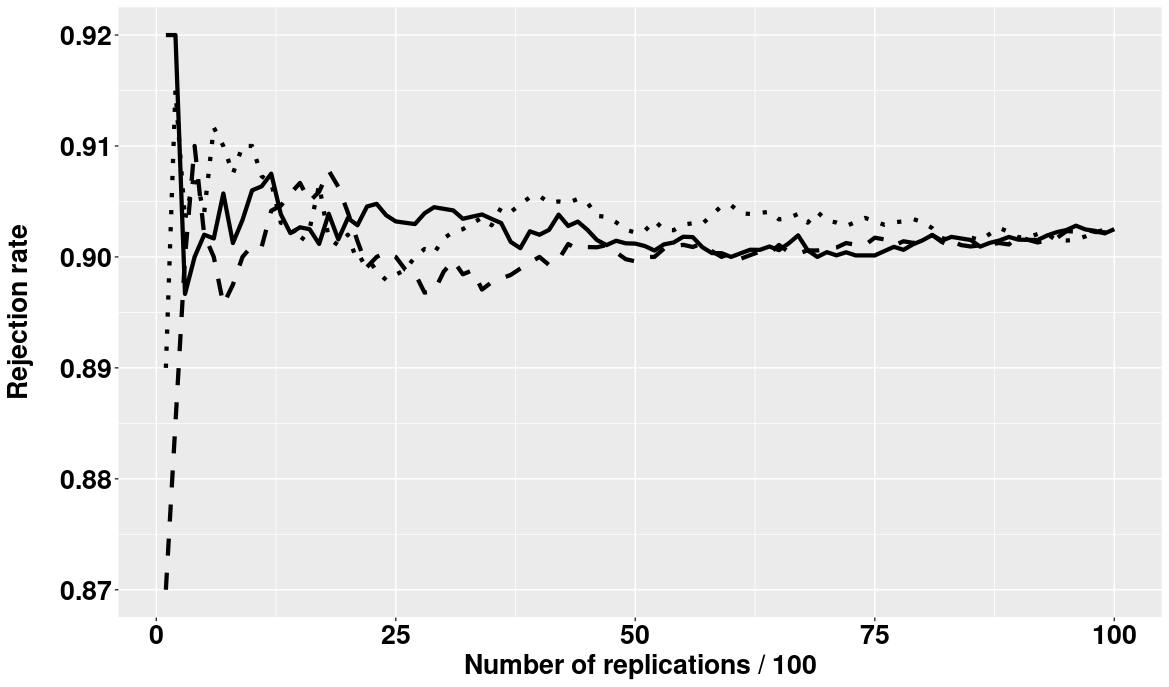}
\end{center}
\caption{Convergence of rejection rates of MCMC generated using $Dir_4(1)$}
\label{MCMCvaccine1}
\end{figure}

\begin{figure}
\begin{center}
\includegraphics[scale=0.5]{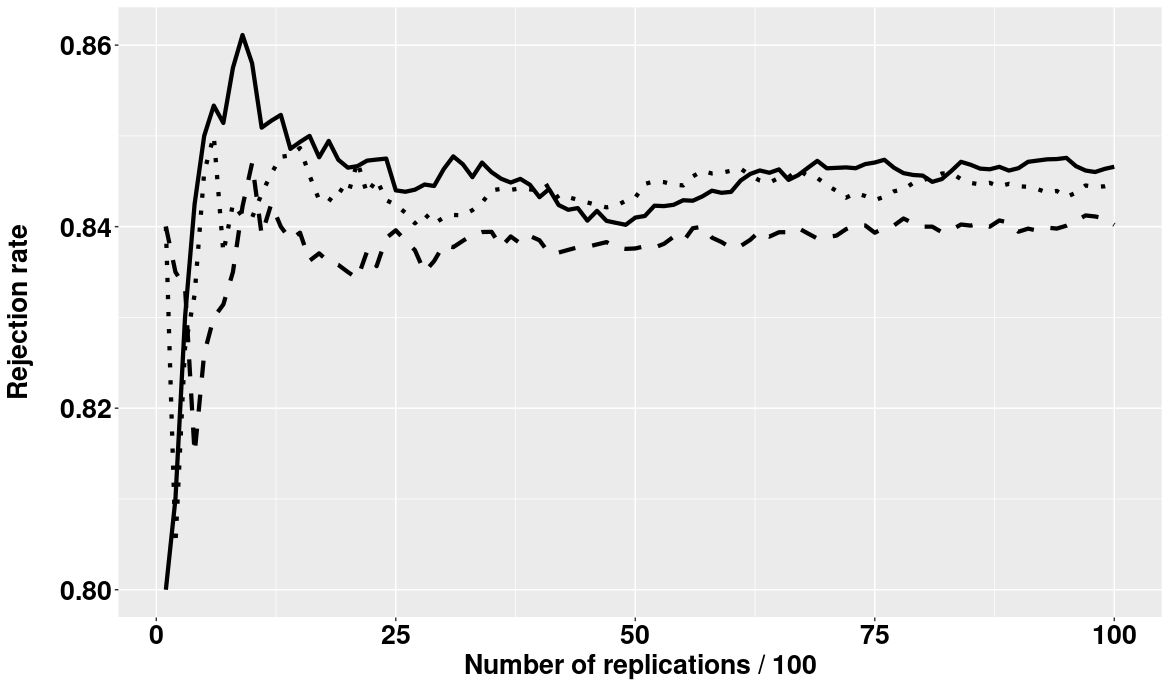}
\end{center}
\caption{Convergence of rejection rates of MCMC generated using $Dir_4(1/2)$}
\label{MCMCvaccine2}
\end{figure}

\begin{table}[ht]
\centering
\small{
\begin{tabular}{r|rr|rr||rr|rr}
&\multicolumn{4}{c||}{Sampling from $Dirichlet_4(1)$} &\multicolumn{4}{c}{Sampling from $Dirichlet_4(1/2)$} \\ [6pt]
\cline{2-9}
&\multicolumn{2}{c|}{$\mathcal{H}_1 =\left\{p_1 p_3 p_4 / p_2^2 = 1,\right.$} &\multicolumn{2}{c||}{$\mathcal{H}_1 = \left\{p_1 p_3 p_4 / p_2^2 = 1,\right.$} & 
\multicolumn{2}{c|}{$\mathcal{H}_1 =\left\{p_1 p_3 p_4 / p_2^2 = 1,\right.$} &\multicolumn{2}{c}{$\mathcal{H}_1 = \left\{p_1 p_3 p_4 / p_2^2 = 1,\right.$}
\\ [6pt]
&\multicolumn{2}{c|}{$\left.p_2 p_4 / p_3^2 = 2\right\}$} &\multicolumn{2}{c||}{$\left.p_2 p_4 / p_3^2 = 3 \right\}$}  &\multicolumn{2}{c|}{$\left.p_2 p_4 / p_3^2 = 2\right\}$} &\multicolumn{2}{c}{$\left.p_2 p_4 / p_3^2 = 3 \right\}$}\\[6pt]
\cline{2-9}
$N$ & $\alpha = 0.05$ & $\alpha = 0.10$ & $\alpha = 0.05$ & $\alpha = 0.10$& $\alpha = 0.05$ & $\alpha = 0.10$ & $\alpha = 0.05$ & $\alpha = 0.10$\\
\hline

   200 & 0.45 & 0.59 & 0.84 & 0.90 & 0.43 & 0.55 & 0.80 &  0.87\\

   220 & 0.49 & 0.62 & 0.87 & 0.92 & 0.47 & 0.60& 0.83 & 0.89 \\  

   240 & 0.53 & 0.66 & 0.90 & 0.94 & 0.50& 0.63  & 0.85 & 0.90 \\  

   260 & 0.57 & 0.69 & 0.91 &0.95 & 0.55& 0.67  & 0.88 & 0.92\\  

   280 & 0.60 & 0.72 & 0.94 & 0.96 & 0.58& 0.70  & 0.90 & 0.93\\

   300 & 0.64 & 0.75 & 0.94 &0.97 & 0.61& 0.72  & 0.91 & 0.94\\ 

   320 & 0.67 & 0.78 & 0.95 & 0.97& 0.63& 0.74  & 0.92 & 0.95\\ 

   340 & 0.70 & 0.79  & 0.97 & 0.98 & 0.66& 0.76  & 0.92 & 0.95\\ 

   360 & 0.73 & 0.83  & 0.97 & 0.98 & 0.70& 0.79  & 0.93 & 0.96\\

   380 & 0.75& 0.84 & 0.98 & 0.99 & 0.71&0.80  & 0.94 & 0.96\\

   400 & 0.77 & 0.85 & 0.98 & 0.99 & 0.73&0.82  & 0.94 & 0.96\\

   420 & 0.79 & 0.86 & 0.98 & 0.99 & 0.75&0.83  & 0.96 & 0.97\\ 

   440 & 0.81 & 0.88 & 0.98 & 0.99 & 0.77&0.85  & 0.96 & 0.97\\ 

   460 & 0.83 & 0.89 & 0.99 & 0.99& 0.79&0.86  & 0.96 & 0.97\\

   480 & 0.85 & 0.90 & 0.99 & 0.99 & 0.80&0.87  & 0.96 & 0.97\\ 
   500 & 0.86 & 0.91& 0.99 & 0.99 & 0.82&0.88  & 0.96 & 0.98\\
\end{tabular}
}
\caption{Rejection rates of $\mathcal{H}_0$ specified in (\ref{H0vaccine}) against the alternative $\mathcal{H}_1$ in (\ref{H1vaccine}) for $k = 2$ and $k = 3$, respectively. Sampling from $\Delta_4$ implemented using Dirichlet distribution $Dir_4(1)$ or  $Dir_4(1/2)$}
\label{PowerTableVaccine1}
\end{table}

\end{document}